\newcommand{\keyword}{\emph}
\newcommand{\obsSet}{\mathbb{M}}
\DeclareMathOperator{\tr}{Tr}
\DeclareMathOperator{\var}{Var}
\DeclareMathOperator{\prob}{Pr}
\DeclareMathOperator*{\argmax}{argmax}
\DeclareMathOperator*{\argmin}{argmin}
\newcommand{\jump}[1]{\mathcal{J}_{#1}}
\newtheorem{lemma}{Lemma}
\begin{document}

\title{
Parameter estimation for quantum jump unraveling
}

\author{Marco Radaelli}
\email[]{radaellm@tcd.ie}
\affiliation{School of Physics, Trinity College Dublin, Dublin 2, Ireland}
\affiliation{Trinity Quantum Alliance, Unit 16, Trinity Technology and Enterprise Centre, Pearse Street, Dublin 2, Ireland}

\author{Joseph A. Smiga}
\email[]{joseph.smiga@rochester.edu}
\affiliation{Department of Physics and Astronomy, University of Rochester, Rochester, New York 14627, USA}

\author{Gabriel T. Landi}
\email[]{gabriel.landi@rochester.edu}
\affiliation{Department of Physics and Astronomy, University of Rochester, Rochester, New York 14627, USA}

\author{Felix C. Binder}
\email[]{felix.binder@tcd.ie}
\affiliation{School of Physics, Trinity College Dublin, Dublin 2, Ireland}
\affiliation{Trinity Quantum Alliance, Unit 16, Trinity Technology and Enterprise Centre, Pearse Street, Dublin 2, Ireland}

\newcommand{\figref}[1]{Fig.~#1}
\newcommand{\reference}[1]{Ref.~\cite{#1}}

\newcommand{\marco}[1]{\textcolor{blue}{[MR:~#1]}}
\newcommand{\gab}[1]{\textcolor{blue}{[GTL:~#1]}}

\begin{abstract}    
    We consider the estimation of parameters encoded in the measurement record of a continuously monitored quantum system in the jump unraveling, corresponding to a single-shot scenario, where information is continuously gathered. Here, it is generally difficult to assess the precision of the estimation procedure via the Fisher Information due to intricate temporal correlations and memory effects. In this paper we provide a full set of solutions to this problem. First, for multi-channel renewal processes we relate the Fisher Information to an underlying Markov chain and derive a easily computable expression for it. For non-renewal processes, we introduce a new algorithm that combines two methods: the monitoring operator method for metrology and the Gillespie algorithm which allows for efficient sampling of a stochastic form of the Fisher Information along individual quantum trajectories. We show that this stochastic Fisher Information satisfies useful properties related to estimation on a single run. Finally, we consider the case where some information is lost in data compression/post-selection and provide tools for computing the Fisher Information in this case. All scenarios are illustrated with instructive examples from quantum optics and condensed matter. 
\end{abstract}

\maketitle

In many situations of physical interest, an experimenter may want to obtain the value of a physical parameter that is not accessible via direct measurement.
To do so, they must estimate the parameter based on a finite sample of outcomes. Estimation theory~\cite{VanTrees_2004, Cover_1991, Demkowicz_2020, Kay_1993} provides techniques for retrieving the value of the parameter based on such sampling, as well as bounds on the maximum precision that can be attained.

A common approach to parameter estimation in quantum systems considers a large number of independent and identically distributed (i.i.d.) copies of the system at hand~\cite{Giovannetti_2006, Giovannetti_2011, Paris_2009}.
A considerable research effort has been devoted to determining if and how genuinely quantum properties, such as entanglement and squeezing, can yield an advantage over fully classical estimation~\cite{Pezze_2014, Pezze_2018, Braun_2018}. 

In contrast to the i.i.d. case, another scenario of experimental importance is that of a single system that is sequentially measured~\cite{Tsang2011,Gammelmark_2014,Clark_2019,Nurdin2022}. 
This can be implemented, for instance, using periodic measurements~\cite{Burgarth_2015,Yang_2023,De_Pasquale2017}, collision models~\cite{Guta_2011, vanHorssen_2015, Seah_2019, OConnor_2021,Godley2023}, or quantum continuous measurements~\cite{Mabuchi_1996, Gambetta_2001, Bouten_2004, Tsang2011, Gammelmark_2013, Gammelmark_2014, Kiilerich_2014, Berry_2015, Kiilerich_2016, Cortez_2017, Albarelli_2018, Amoros_2021, Nurdin2022, Boeyens_2023, Rinaldi_2023}.
Sometimes the system can be reset after each measurement, leading to a scenario that is akin to the i.i.d. case. However, this is not always the case. As a consequence, each data point is no longer independent, leading to a (possibly intricate) memory structure in the detection record.

In this work we provide a full characterization of quantum metrology for quantum jump unravelings~\cite{Plenio1998,Wiseman_2009}. In this scenario  a system evolves according to abrupt jumps that occur at random times and in random channels, following statistics determined by the system dynamics. 
In between jumps the evolution is smooth and governed by a modified non-Hermitian Hamiltonian. Such quantum jump unravelings are an apt description of the common experimental case in which the measurement apparatus only records the type and time of emissions from the system, e.g. in atomic physics~\cite{Kubanek_2009}, superconducting systems~\cite{Vijay_2011, Siva_2022} or cavity QED~\cite{Gleyzes_2007, Rybarczyk_2015}.

\begin{figure*}[bt]
    \centering
    \includegraphics[width=.76\textwidth]{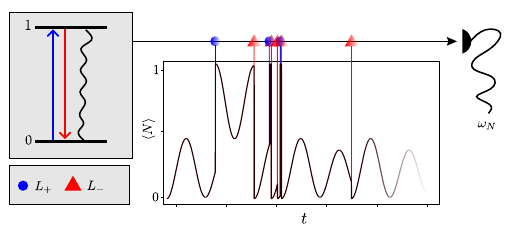}
    \caption{A quantum process emits a signal every time a jump occurs. There are two different jump channels (labeled as $L_+$ and $L_-$), corresponding to distinguishable signals. The experimenter has access only to the measurement record, constituted by jump time and jump channel for each emission. In the illustration, the quantum system is a qubit coupled to a thermal bath, whose behavior is discussed in detail in Sec.~\ref{sec:qubit_thermometry}. The plot illustrates the expectation value of the number observable for the system as a function of time.}
    \label{fig:drawing}
\end{figure*}

Note that the problem addressed here differs from metrology on the unconditional state of an open quantum system. In the latter case, the input of the estimation task is the full state of the open quantum system. Here, in contrast, the input is a measurement record as would be read on a detector during an experimental run. In this scenario, the observer has no control over the interaction between the system and the detector.

Restricted treatments of this problem were given in Refs.~\cite{Kiilerich_2014, Boeyens_2023, Gammelmark_2013} for the specific case of single-channel photodetection. 
Here we extend this to arbitrary jump processes with multiple channels. 
Fig.~\ref{fig:drawing} illustrates this scenario for the case of a qubit with two monitoring channels. 
The measurement record of such an experiment is generally of the form 
\begin{equation}
    \omega_{1:N} = \{(\tau_1, k_1), (\tau_2, k_2), \ldots, (\tau_N, k_N)\}\,,
    \label{eq:measurement_record}
\end{equation}
where $k_i$ labels the respective channels and $\tau_i$ labels the time between jumps $k_i$ and $k_{i-1}$. 
We assume that the system dynamics depend on some parameter $\theta$, which may enter via the Hamiltonian, the jump operators, or both. The challenge is then to estimate $\theta$ to high precision.
While elegant quantum bounds that maximize over all unravelings have been introduced in Refs.~\cite{Tsang2011,Gammelmark_2014,Godley2023}, these are generally not saturable in the jump unraveling. In contrast, we here focus on estimators $\hat{\theta}(\omega_{1:N})$ that only employ the (classical) measurement record in order to derive saturable precision bounds. 

We remark that since the experimenter has no control over the measurement to be performed, we do not optimize over possible measurement strategies. Consequently, our problem is one of (correlated) classical metrology, in contrast with the standard framework of quantum metrology, where the experimenter can choose the measurement to be applied~\cite{Giovannetti_2006,Giovannetti_2011}.

An interesting feature of the quantum jump scenario are the intricate memory patterns that different jumps retain of the past. 
The qubit in Fig.~\ref{fig:drawing} is an example of a renewal process, where after each jump we know with certainty which state the system jumped to. 
Multi-channel renewal systems can be mapped to semi-Markov process~\cite{Carollo2019,Brown2023} so that the memory is short-lived. 
Many quantum systems, however, are not renewal. 
A simple example is an optical cavity subject to direct photodetection: each jump perturbs the system, but does not fully reset the state. 
Quite the contrary, the memory in these systems can be extremely long lived. 
This makes it  difficult to estimate the Fisher Information (FI).
Our goal in this paper is to provide a full set of tools for doing so. 
For multi-channel renewal we show that a simple-to-use formula can be derived (Sec.~\ref{sect:Multi_channel_renewal_processes}). 
Conversely, for non-renewal processes we introduce a new algorithm based on the monitoring operator~\cite{Gammelmark_2013} and the quantum Gillespie method~\cite{Radaelli_2023b}. This approach naturally introduces the concept of a stochastic FI which can be sampled along individual trajectories by our algorithm (Sec.~\ref{sect:the_WTD_for_non_renewal_processes}). We show that the stochastic FI naturally fits into the description of single-shot estimation. A summary of these results is given in Table~\ref{tab:Fisher_information_different_systems}.

\begin{table*}[bt]
    \centering
    \begin{tabular}{ll}
        \hline\hline
        \textbf{Process type} & \textbf{Fisher information} \\
        \hline
        Single-channel renewal & $F_N(\theta) = - N \int_0^\infty d\tau W(\tau) \left(\frac{\partial^2}{\partial\theta^2} \ln W(\tau)\right)$~\cite{Kiilerich_2014} \\
        Multi-channel renewal & $F_N(\theta) = - N \sum_{m,n \in \obsSet} p_m \left(\frac{\partial^2}{\partial\theta^2} \ln W(\tau, n|m)\right)$   [Eq.~\eqref{eq:Fisher_renewal}] \\
        Multi-channel non-renewal & Numerical Gillespie algorithm [Sec.~\ref{sec:Gillespie-Fisher_monitoring}] \\ \hline\hline
    \end{tabular}
    \caption{Our work covers the calculation of the Fisher information contained in the WTD for a wide set of quantum processes, both renewal (where we extend a previously known result) and non-renewal, where we present a numerical strategy allowing for an efficient calculation.}
    \label{tab:Fisher_information_different_systems}
\end{table*}

Furthermore, as a relevant specification of our work, we discuss the loss of Fisher information under data compression/post-selection (Sec.~\ref{sect:Information_postprocessing}). 
For example, what happens if one loses the information about the distinction among channels $k_i$ in Eq.~\eqref{eq:measurement_record}? Or what happens if we build estimators based only on summary statistics, such as the average waiting times $E(\tau_i)$? 
Our results are illustrated with several examples in Sec.~\ref{sec:examples}. 

\section{Setup}

\subsection{Estimation theory}
\label{sect:estimation_theory}

Consider a string of random variables $X_{1:N}$ (here and in the following, the notation for strings is left- and right-inclusive: $X_{1:N}=X_1 X_2\ldots X_{N-1}X_N$), distributed according to a joint probability distribution $\prob_\theta(X_{1:N})$ which depends on some unknown parameter $\theta \in \Theta$. We assume $\theta$ to be a scalar; the generalization to multi-parameters is straightforward. 
The goal of estimation theory is to construct an estimator $\hat{\theta}(X_{1:N})$ that provides an estimate of $\theta$. 
The variance of an unbiased estimator ($\mathbb{E}(\hat{\theta}) = \theta$) is bounded from below by the Cram\'er-Rao bound~\cite{Fisher_1922, Cramer_1946}
\begin{equation}\label{eq:CramerRao}
    \var\left[\hat{\theta}(X_{1:N})\right] \geq \frac{1}{F_{1:N}(\theta)}\,,
\end{equation}
where the Fisher Information (FI)
\begin{equation}
    F_{1:N}(\theta) = \sum_{x_{1:N}} \prob_\theta (x_{1:N}) \big[\partial_\theta \log \prob_\theta(x_{1:N})\big]^2\,,
\label{eq:Fisher_information}
\end{equation}
represents the total information that $\prob_\theta(X_{1:N})$ contains about $\theta$. 
In the simplest scenario the $X_i$ are independent and identically distributed (i.i.d.), causing the FI to simplify as $F_{1:N}(\theta) = NF_1(\theta)$~\cite{Zegers_2015}. 
In this case, the quantity $F_1(\theta)$ represents the information rate --- i.e., the FI acquired per symbol observed. 
In the presence of correlations, in contrast, the calculation of Eq.~\eqref{eq:Fisher_information} becomes significantly involved, owing to the high dimensionality of the sum.

The Cramér-Rao bound of Eq.~\eqref{eq:CramerRao} is always asymptotically saturable, via the maximum likelihood estimation technique detailed in Sect.~\ref{sect:MLE_estimation}. In the case of finite data, however, the bound may not be saturable. If it is, the choice of the appropriate estimator may be contingent on the specific form of~$\text{Pr}_\theta$. For this reason, in the present work we focus on the computation of the Fisher information, and on the saturation of the bound in the asymptotic limit.

\subsection{The waiting time distribution}
\label{sect:the_waiting_time_distribution}

In this paper we wish to investigate the above ideas in the context of continuously measured systems in the quantum jump unraveling. 
We consider a general Markovian evolution of a quantum state $\rho(t)$, as described by a quantum master equation of the form
\begin{equation}
    \frac{d\rho}{dt} = \mathcal{L}\rho = -i[H,\rho] + \sum_{k=1}^r \mathcal{D}[L_k]\rho\,,
    \label{eq:GKSL}
\end{equation}
where $\mathcal{L}$ is the Liouvillian, $H$ is the system Hamiltonian and $\mathcal{D}$ the dissipator, defined by the relation $\mathcal{D}[L]\bullet = L\bullet L^\dagger - \frac{1}{2}\{L^\dagger L, \bullet\}$. 
$\{L_k\}$ represents a set of jump operators. Equation~\eqref{eq:GKSL} can be obtained from a closed-system microscopic description including both the system and the environment, in the limit of a weak interaction between them.
We will refer to each $k$ as a \keyword{channel}. 
In practical problems, each can represent a different physical entity. 
For example, photons might be emitted in different frequencies or polarizations. 
Conversely, in quantum dots one jump operator might represent the injection of electrons from a lead, while another represents extraction. 

Equation~\eqref{eq:GKSL} is a differential equation for the evolution of the state $\rho$. Equivalently, it can be seen as the average over stochastic trajectories, where a different effect is applied at each timestep; this approach is known as \textit{unraveling} of the master equation. The role of unravelings is two-fold: on one hand, they can be used to numerically solve the GKSL equation by sampling~\cite{Mollow_1975,Breuer_2002}. Secondly, unravelings describe what an experimenter observes in the lab under continuous monitoring. In this work, we employ unravelings in the latter sense, as single-run measurement records.

Different unravelings of Eq.~\eqref{eq:GKSL} have been proposed in the literature~\cite{Wiseman_2009}, with a prominent role being played by homodyne unraveling and quantum jump unraveling~\cite{Mollow_1975}, corresponding to different experimental setups. 

In this work, we focus on quantum jump unraveling, in which one considers quantum trajectories described by abrupt jumps occurring at random times and in random channels (as in Eq.~\eqref{eq:measurement_record})~\cite{Wiseman_2009}. 
In between jumps the system  evolves smoothly according to a (non-unitary) no-jump dynamics. 
The jump channels $k$ can thus be associated with ``detectors'' which record whenever a jump occurs in channel $k$. 
In many cases not all channels can be monitored. 
We therefore define the set of monitored channels $\obsSet$. 
We may also assume that each channel has a finite efficiency $\eta_k$ of recording a detection, such that $\eta_k = 1$ represents perfect efficiency. 
We define the superoperators
\begin{equation}
    \mathcal{J}_k \rho = \eta_k L_k \rho L_k^\dagger\,,
    \label{eq:jump_superoperators}
\end{equation}
as well as the no-jump superoperator 
\begin{equation}\label{eq:no_jump_op}
    \mathcal{L}_0 = \mathcal{L} - \sum_{k\in\obsSet} \mathcal{J}_k\,,
\end{equation}
where the sum is only over the monitored set~$\obsSet$\footnote{\label{footnote:efficiency-unmonitored} Note here that from a mathematical perspective, it is actually not necessary to use both efficiencies $\eta_k$ and a subset $\obsSet$ of monitored channels.
For example, we can treat all channels as monitored and then set $\eta_k=0$ for those channels which are actually not. 
Likewise, if we prefer not to use efficiencies, we can decompose a Lindblad dissipator as $\mathcal{D}[L_k] =  \mathcal{D}[\sqrt{\eta_k} L_k] +  \mathcal{D}[\sqrt{1-\eta_k}L_k]$ splitting each channel in two, with new jump operators $\sqrt{\eta_k} L_k$ and $\sqrt{1-\eta_k} L_k$. Then we can include $\sqrt{\eta_k} L_k$ in the subset $\obsSet$ and leave $\sqrt{1-\eta_k} L_k$ out. 
Keeping both concepts can be convenient, however, especially when making the connection with experiments.}.

The stochastic dynamics in the quantum jump unraveling can now be written as~\cite{Bouten_2004, Landi_2023} 
\begin{align}
    d\rho = dt\mathcal{L} \rho + \sum_{k\in\obsSet} &\Big(dN_k(t)- dt \tr(\mathcal{J}_k\rho)\Big) \nonumber\\ 
    & \cdot\left(
    \frac{\mathcal{J}_k \rho}{\tr(\mathcal{J}_k\rho)}
    - \rho\right)\,,    
\label{eq:quantum_jump_stochastic_master_equation}
\end{align}
where $dN_k$ is a Poisson increment --- that is, a random variable that takes the value 1 if a jump occurs in channel $k$, and 0 otherwise. The probability of this happening is $\prob(dN_k(t) = 1) = dt \tr[\mathcal{J}_k \rho]$. 
Since these probabilities are infinitesimal, at each time step $dt$ the probability of a jump taking place is vanishingly small. 
In other words, as is intuitive, most of the time $dN_k=0$ and the system will evolve with no jumps. 

Because of this structure, it is simpler (and entirely equivalent) to consider a stochastic process where we specify the time between jumps, as well as the channel in which the jump occurred. 
This is precisely the trajectory $\omega_{1:N}$ in Eq.~\eqref{eq:measurement_record} and is, in fact, the stochastic process that an experimentalist would observe. 
Thus, instead of dealing with the stochastic master equation~\eqref{eq:quantum_jump_stochastic_master_equation}, we can simply write down the joint probability density of observing the trajectory $\omega_{1:N}$. 
For concreteness, we assume that the system is initially in the state, $\rho_0$. 
The probability density of observing the measurement record Eq.~\eqref{eq:measurement_record} is then 
\begin{equation}\label{eq:trajectory_probability}
    \prob(\omega_{1:N}) = \tr\big\{ \mathcal{J}_{k_N} e^{\mathcal{L}_0 \tau_N} \cdots \mathcal{J}_{k_1} e^{\mathcal{L}_0 \tau_1} \rho_{0}\big\}\,.
\end{equation}
The question at hand is then the following: \emph{assuming that $H$ and $L_k$ can depend on an unknown parameter $\theta$, what is the precision with which we can estimate $\theta$ based only on the stochastic process in Eq.~\eqref{eq:measurement_record}?}
The answer is given by the FI, Eq.~\eqref{eq:Fisher_information}, with the stochastic string $X_{1:N}$ replaced by $\omega_{1:N}$. 
In general, Eq.~\eqref{eq:trajectory_probability}  can exhibit complex correlations: the effects of a given jump can persist, even after other jumps are measured. 
As a consequence, it cannot in general be decomposed in terms of simpler probabilities.
This makes a direct evaluation of Eq.~\eqref{eq:Fisher_information} extremely involved.
Addressing this challenge is the main contribution of this paper. 
Before proceeding, however, some comments about Eq.~\eqref{eq:trajectory_probability} are in order: 
\begin{itemize}
    \item Equation~\eqref{eq:trajectory_probability} refers only to the monitored jumps $k\in \obsSet$: in between them, any number of non-monitored jumps can occur. This is automatically taken into account via the definition of $\mathcal{L}_0$. 
    
    \item The probability given by Eq.~\eqref{eq:trajectory_probability} refers to a specific number of jumps $N$. This can be viewed as an ensemble where $N$ is fixed, but the final time $t_f = \sum_{i=1}^N \tau_i$ is allowed to fluctuate. We call this the $N$-ensemble. One could also consider an ensemble where $N$ fluctuates and $t_f$ is fixed. We will call this the $t_f$-ensemble. In the limit of large $N$ (or large $t_f$) results for the two ensembles is expected to coincide~\cite{Garrahan_2010}. 
    
    \item Equation~\eqref{eq:trajectory_probability} is not necessarily normalized. Sub-normalization occurs in cases where there is a non-zero probability that a jump never happens. Here, we exclude this possibility by assumption. That is, we work with processes where jumps must always eventually occur (called \textit{persistent} in the stochastic process literature~\cite{Marzen_2015}). Normalization is then given by $\int d\omega_{1:N} \prob(\omega_{1:N})=1$ with 
    \begin{equation}
        \int d\omega_{1:N} = \sum_{k_1,\ldots,k_N} \int\limits_0^\infty d\tau_1\cdots \int\limits_0^\infty d\tau_N\,.
    \end{equation}
\end{itemize}

\section{Multi-channel renewal processes}
\label{sect:Multi_channel_renewal_processes}

A special subclass of processes exhibiting a much-simplified memory structure are the so-called renewal processes, in which the state is reset after each jump. 
This occurs whenever the jump operators satisfy the \textit{renewal condition},
\begin{equation}
    \jump{k} \rho = \tr\left[\jump{k}\rho\right] \sigma_k\,,
    \label{eq:renewal_condition}
\end{equation}
where  $\sigma_k$ is a physical state that depends on the jump channel but not on $\rho$. 
Equation~\eqref{eq:renewal_condition} causes Eq.~\eqref{eq:trajectory_probability} to factorize as 
\begin{equation}
    \prob(\omega_{1:N}) = W(\tau_N, k_N | k_{N-1}) \cdots W(\tau_1, k_1 | k_0)\,,
\label{eq:probability_measurement_record_renewal}
\end{equation}
where
\begin{equation}
    \label{eq:wtd_renewal}
    W(\tau,k|q) = \tr\big\{ \mathcal{J}_k e^{\mathcal{L}_0 \tau} \sigma_q\big\}\,,
\end{equation}
is the waiting time distribution for the next jump to occur after time interval $\tau$ and in channel $k$, given that the previous jump occurred in channel $q$. 
Notice that this is normalized as $\sum_k \int_0^\infty W(\tau, k|q) d\tau = 1$ for all $q$. 
In terms of composite random variables 
$x_i = (\tau_i,k_i)$, it is clear that under the renewal condition $\omega_{1:N}$ itself is a Markov chain (Markov order~1 stochastic process~\cite{Ryan_2014}). 

For renewal processes in which each measurement corresponds to a single jump operator $L_k$ acting on the state, $\mathcal{J}_k \rho = \gamma_k L_k \rho L_k^\dagger$, satisfying the renewal condition [Eq.~\eqref{eq:renewal_condition}], one may show that the jump operators themselves must always have the form $L_k = \ket{\mu_k}\bra{\nu_k}$ (the constant rate $\gamma_k$ can always be chosen to satisfy this condition), where $\ket{\mu_k}$ and $\ket{\nu_k}$ are generic normalized quantum states which need not be orthogonal (c.f. Appendix~\ref{sect:form_of_the_jump_operators_for_renewal_processes} for a proof).
It then follows that the post-jump states in this renewal process are pure, $\sigma_k = \ket{\mu_k}\bra{\mu_k}$, while the jump probabilities are $\tr(\mathcal{J}_k\rho) = \gamma_k \bra{\nu_k} \rho \ket{\nu_k}$. 
The waiting time distribution in Eq.~\eqref{eq:wtd_renewal} can therefore be simplified to
\begin{equation}
    W(\tau,k|q) = \gamma_k \bra{\nu_k} e^{\mathcal{L}_0\tau}\big(\ket{\mu_q}\bra{\mu_q}\big) \ket{\nu_k}\,.
\end{equation}
If all channels are monitored, the no-jump superoperator can be written as $\mathcal{L}_0 \rho = -i (H_e \rho - \rho H_e^\dagger)$, with a non-Hermitian Hamiltonian 
\begin{equation}\label{non_hermitian_He}
    H_e = H - \frac{i}{2}\sum_k L_k^\dagger L_k = H - \frac{i}{2}\sum_k \gamma_k \ket{\nu_k}\bra{\nu_k}\,.
\end{equation}
The WTD then simplifies further to 
\begin{equation}\label{eq:renewal_W_as_probability_amplitude}
W(\tau,k|q) = |\Psi(\tau,k|q)|^2\,,    
\end{equation}
where $\Psi(\tau,k|q)$ is a ``waiting time amplitude,'' given by
\begin{equation}\label{eq:renewal_prob_amplitude}
    \Psi(\tau,k|q) = \sqrt{\gamma_k} \bra{\nu_k} e^{-i H_e \tau} \ket{\mu_q}\,.
\end{equation}

\subsection{Fisher information of multi-channel renewal processes}

We now analyze how Eq.~\eqref{eq:Fisher_information} simplifies under the factorization in Eq.~\eqref{eq:probability_measurement_record_renewal}.
As shown in Appendix~\ref{sect:Fisher_information_for_renewal_processes}, except for a small boundary term that is negligible when $N$ is large, the FI for renewal processes can be written as
\begin{equation}
    F_{1:N}(\theta) =  N \sum_{k,q \in \obsSet} p_q \int_0^\infty d\tau \frac{\big[ \partial_\theta W(\tau,k|q)\big]^2}{W(\tau, k|q)}\,.
    \label{eq:Fisher_renewal}
\end{equation}
That is, the Fisher information of the entire measurement record reduces to the Fisher information of the WTD. 
Here 
\begin{equation}\label{eq:renewal_pk}
p_k:=\frac{\tr\left[\jump{k} \rho_{\rm ss} \right]}{\sum_j\tr\left[\jump{j}\rho_{\rm ss}\right]}\,,
\end{equation}
is the steady-state probability with which jumps occur in channel $k$. 
Equation~\eqref{eq:Fisher_renewal} is our first main result. 
Notice that the scaling of the Fisher information is still linear in the number of jumps; this is expected for processes of finite Markov order~\cite{Radaelli_2023} and beyond~\cite{Riechers2023}, and is in line with Ref.~\cite{Gammelmark_2014}, which studied the quantum Fisher information maximized over all possible unravelings of the master equation. We would like to alert the reader to a possible confusion with similar expressions sometimes found in the discussion of the Fisher information for classical master equations (e.g.~\cite{Smiga_2023}): $W$ here is a function taking values over real numbers and a pair of jump labels, very different from a rate matrix.

The Cram\'er-Rao bound involving the Fisher information for a multi-channel renewal process can be asymptotically saturated by maximum likelihood estimation~\cite{VanDerVaart_2000,Cover_1991}, where the log-likelihood function takes the form
\begin{equation}\label{eq:loglikelihood_renewal}
    \ell(\tilde{\theta}|\omega_{1:N}) = \sum_{j=2}^N \log W_{\tilde{\theta}}(\tau_j,k_j|k_{j-1}).
\end{equation}
Here, $W_{\tilde{\theta}}$ represents the waiting time distribution assuming that the value of the parameter is $\tilde{\theta}$, and border terms, that do not play any role in the asymptotic limit, are neglected. Other estimators generally perform worse than the MLE, at least in the limit of a large number of jumps on the trajectory. An example of an estimation task performed on the qubit thermometry system (as defined in Sec.~\ref{sec:qubit_thermometry}) is discussed in Appendix~\ref{app:MLE_qubit_thermo}.

In the case of a single jump channel Eq.~\eqref{eq:probability_measurement_record_renewal} will factor into a product of probabilities. 
The waiting times $\tau_i$ therefore become independent and identically distributed (the process becomes of Markov order~0). This is the scenario of renewal theory~\cite{Vacchini_2020} and Eq.~\eqref{eq:Fisher_renewal}  simplifies to 
\begin{equation}\label{eq:Fisher_renewal_single_channel_iid}
    F_{1:N}(\theta) = N \int\limits_0^\infty d\tau\,
    \frac{\big[\partial_\theta W(\tau)\big]^2}{W(\tau)}\,,
\end{equation}
where $W(\tau) = \tr\big\{\mathcal{J} e^{\mathcal{L}_0 \tau} \sigma\big\}$. 
This agrees with a result derived in Ref.~\cite{Gammelmark_2014}. 

The Fisher information in Eq.~\eqref{eq:Fisher_renewal} encompasses the information contained in both the jump channels and the jump times. 
It is possible to separate the two individual contributions using the basic decomposition law $F(X,Y) = F(Y) + F(X|Y)$, which holds for the Fisher information of any joint probability distribution~\cite{Zegers_2015,Micadei2015,Radaelli_2023}. As a result, one finds
\begin{equation}\label{eq:Fisher_renewal_splitting}
    F_{1:N}(\theta) = F_{1:N}^{\rm ch}(\theta) + F_{1:N}^{\rm times|ch}(\theta)\,,
\end{equation}
where 
\begin{equation}\label{eq:Fisher_renewal_channels}
    F_{1:N}^{\rm ch}(\theta) = N \sum_{k,q\in\obsSet} p(k|q)p_q \big(\partial_\theta \ln p(k|q)\big)^2\,,
\end{equation}
is the Fisher information of the sequence of jump channels only, with 
\begin{equation}\label{eq:renewal_markov_transition}
    p(k|q) = \int\limits_0^\infty d\tau~W(\tau,k|q) = - \tr\big\{ \mathcal{J}_k \mathcal{L}_0^{-1} \sigma_q\big\}\,,
\end{equation}
being the probability distribution that the system transitions from $q\to k$, irrespective of when it happens. Here, we assume that the inverse of the no-jump Lindbladian exists; this is a sufficient condition for the absence of dark subspaces.

We mention in passing that the quantity $p_k$ in Eq.~\eqref{eq:renewal_pk} is also the solution of 
\begin{equation}\label{eq:renewal_markov_chain_jump_channels}
    \sum_q p(k|q)p_q = p_k\,,
\end{equation}
as one may verify by explicit substitution. 
The second term in Eq.~\eqref{eq:Fisher_renewal_splitting}, on the other hand, is the conditional Fisher information of the time tags, given one knows the sequence of channels:
\begin{equation}\label{eq:Fisher_renewal_times_given_channels}
\begin{aligned}
    F_{1:N}^{\rm times|ch}(\theta) = N\sum_{k,q\in\obsSet} p(k|q) p_q 
    \int\limits_0^\infty d\tau\,\frac{\big[ \partial_\theta W(\tau|k,q)\big]^2}{W(\tau|k,q)}\,,
\end{aligned}
\end{equation}
where 
\begin{equation}\label{eq:renewal_splitting_waiting_times_conditioning}
    W(\tau|k,q) = \frac{W(\tau,k|q)}{p(k|q)}\,,
\end{equation}
is the probability that the jump takes a time $\tau$, given that the sequence of jumps was $q\to k$. 
The quantity $F_{1:N}^{\rm times|ch}$ is therefore the FI of the time tags, given that one knows the channels. 
The FI of the time tags when one is ignorant about the channels will be discussed in Sec.~\ref{sect:Information_postprocessing}.

The two terms in Eq.~\eqref{eq:Fisher_renewal_splitting} are individually non-negative, and have clear physical interpretations. Therefore, they can be used to assess how much information is contained in each aspect of the stochastic process (channels and times). 
The relative contribution of $F^{\rm ch}$ to the total Fisher information is illustrated in Fig.~\ref{fig:FI_qubit_thermometry}(b-c).

In passing, we mention that if 
we insert Eq.~\eqref{eq:renewal_splitting_waiting_times_conditioning} into the general probability string Eq.~\eqref{eq:probability_measurement_record_renewal}, we can write it as
\begin{equation}
\begin{aligned}
    \prob(\omega_{1:N}) &= \Big[W(\tau_N|k_N,k_{N-1}) \cdots W(\tau_1|k_1,k_0)\Big] \\
     &\qquad \times \Big[p(k_N|k_{N-1}) \cdots p(k_1|k_0)\Big]\,.
\end{aligned}    
\end{equation}
Thus, we see that for renewal processes the path trajectory can be split into a probability for the sequence of channels $k_0\to k_1\to \cdots$ times the conditional probability of the jump times $\tau_i$, given this sequence of channels.

The integral appearing in Eq.~\eqref{eq:Fisher_renewal} can be difficult to compute analytically, depending on the shape of the WTD. The same is true for Eq.~\eqref{eq:Fisher_renewal_times_given_channels}. On the other hand, Eqs.~\eqref{eq:renewal_markov_transition} and~\eqref{eq:Fisher_renewal_splitting} allow for an easy computation of the Fisher information when time tags are ignored. In Appendix~\ref{sect:particular_cases}, we discuss the special case of a WTD that is a single decaying exponential, for which the integration can be analytically carried out; this case most frequently appears in incoherent systems described by classical master equations. In Appendix~\ref{app:FI_bound}, we introduce a bound on the Fisher information in terms of the mean and variance of the WTD, as
\begin{equation}
    F_{1:N}(\theta) \geq F_{1:N}^{\rm ch}(\theta) + N \sum_{k,q} \frac{\left[\partial_\theta \mu_{kq}\right]^2}{\sigma_{kq}^2} p\left( k\middle\vert q \right)p_q\,,
\end{equation}
where $\mu_{kq}=\mathbb{E}\left[ \tau \middle\vert k,q \right]$ and $\sigma_{kq}^2=\var\left[ \tau \middle\vert k,q \right]$ are the conditional mean and variance of $\tau$, calculated from Eq.~\eqref{eq:renewal_splitting_waiting_times_conditioning}. This bound can be useful in situations in which the integral of Eq.~\ref{eq:Fisher_renewal} is hard to compute, but the variance and the mean of the WTD is computable. 

One may also consider the inverse decomposition of the Fisher information between channels and jumps, of the form $F_{1:N}(\theta) = F_{1:N}^{\text{times}}(\theta) + F_{1:N}^{\text{ch}\vert\text{times}}(\theta)$. While this decomposition is formally valid, the times-only process is, in general, non-renewal even though the original process is renewal. This prevents us from expressing~$F_{1:N}^{\text{times}}$ analytically. The computation can be performed numerically, as detailed in Sect.~\ref{sect:Information_postprocessing}.

\section{Non-renewal processes}
\label{sect:the_WTD_for_non_renewal_processes}

The calculation of the FI in Eq.~\eqref{eq:Fisher_information} for a general measurement record $\omega_{1:N}$ is difficult because the probability does not factorize in any simple way as it did for renewal processes in Eq.~\eqref{eq:probability_measurement_record_renewal}. 
As a consequence, one must compute a large number of high-dimensional sums/integrals, rendering the problem generally intractable. 
In this section we start by reviewing the monitoring operator formalism~\cite{Gammelmark_2013}, which allows one to compute the Fisher information by sampling over quantum trajectories.
We then adapt this formalism to our specific ensemble distribution given by Eq.~\eqref{eq:trajectory_probability}, and present a new algorithm for efficient computation of the Fisher information over non-renewal quantum trajectories, based on the quantum Gillespie algorithm introduced in Ref.~\cite{Radaelli_2023b}.

\subsection{Monitoring operator for generic stochastic quantum mechanics} \label{sec:Gillespie-Fisher_monitoring}
Consider a generic set of trace-non-increasing superoperators $\{\mathcal{M}_x\}$ that describe the evolution of the system after outcome $x$. Namely, if the system is initially in the normalized state $\rho$, then
\begin{equation}
    \tilde{\rho}' = \mathcal{M}_x \rho
\end{equation}
is the unnormalized state such that $\tr \tilde{\rho}'$ is the probability of measuring outcome $x$ and $\rho'=\frac{\tilde{\rho}'}{\tr \tilde{\rho}'}$ is the corresponding normalized state. Stringing many measurements together, starting from an initial state $\rho_0$, one defines the unnormalized density operator
\begin{equation}
    \tilde{\rho}_N = \mathcal{M}_{x_N} \cdots \mathcal{M}_{x_1} \rho_0\,.
\end{equation}
The corresponding probability is 
\begin{equation}\label{eq:prob_Mx_gen}
    \prob(x_1,\ldots,x_N) = \tr\{\tilde{\rho}_N\}. 
\end{equation}
For the case of quantum jumps, $\mathcal{M}_x = \mathcal{J}_k e^{\mathcal{L}_0\tau}$, so the outcomes are given by the pair $x = (k,\tau)$.

The \keyword{monitoring operator} for an unknown parameter $\theta$ is defined as as~\cite{Gammelmark_2013} 
\begin{equation}
    \xi_{N} = \frac{\partial_\theta \tilde{\rho}_{N}}{\tr\{\tilde{\rho}_{N}\}}\,.
    \label{eq:monitoring_operator}
\end{equation}
One may verify that 
\begin{equation}
        F_{1:N}(\theta) = \mathbb{E}\big[ \tr(\xi_{N})^2 \big]\,,
    \label{eq:FI_monitoring_op}
\end{equation}
is precisely the general FI in Eq.~\eqref{eq:Fisher_information} (here the ensemble average is over Eq.~\eqref{eq:prob_Mx_gen}).
Hence, stochastically sampling the monitoring operator provides us a method for computing the Fisher information in the non-renewal case. 
In addition, as we also discuss in Sec.~\ref{sect:MLE_estimation}, 
the monitoring operator can be used to perform maximum likelihood estimation tasks on quantum jump trajectories, solving many numerical instabilities issues with respect to a more traditional approach.

A major benefit of the monitoring operator formalism is that there is a simple, inductive process to calculate it. First, for an initial state $\rho_0$, the monitoring operator is initialized as 
\begin{equation}\label{eq:xi_0}
    \xi_0 = \partial_\theta \rho_0\,.
\end{equation}
Starting from the state, $\rho_j$, the next observation $x_{j+1}$ is sampled from
\begin{equation}
    \prob(x_{j+1}|\rho_j) = \tr\{\mathcal{M}_{x_{j+1}} \rho_j\}\,,
\end{equation}
and the state evolves to $\rho_{j+1}=\frac{\mathcal{M}_{x_{j+1}} \rho_j}{\tr\{\mathcal{M}_{x_{j+1}} \rho_j\}}$. Meanwhile, the monitoring operator evolves to
\begin{equation}\label{eq:generic_MO_step}
    \xi_{j+1} = \frac{(\partial_\theta \mathcal{M}_{x_{j+1}})\rho_j + \mathcal{M}_{x_{j+1}}\xi_j}{\tr\{\mathcal{M}_{x_{j+1}}\rho_{j}\}}\,.
\end{equation}
Note that this calculation only depends on the previous monitoring operator $\xi_j$, the superoperators $\{\mathcal{M}_x\}$, and the normalized density operator $\rho_j$.
This process does not depend on keeping track of the unnormalized state, which leads to better numeric stability~\cite{Albarelli_2018}.

The inductive algorithm described above enables a computationally simple way to determine the evolution of the state $\rho_j$ and monitoring operator $\xi_j$ from their respective initial conditions. For calculating the Fisher information, this process is repeated many times to calculate $F_{1:N}(\theta) = \mathbb{E}\big[ \tr(\xi_{N})^2 \big]$. In the next section, we show how to use the monitoring operator formalism to perform MLE estimation on general quantum stochastic processes.

We remark that the monitoring operator formalism is fully general with respect to the specific unraveling of a given master equation. For instance, it can be applied to homodyne unraveling~\cite{Gammelmark_2013, Albarelli_2018}. Consequently, most of the results presented below can be adapted to homodyne detection or other protocols, with the exception of the use of the Fisher-Gillespie algorithm, which crucially relies on the jump structure of the evolution.

\subsection{MLE estimation}
\label{sect:MLE_estimation}
Consider a measurement record $\omega_{1:N}$, i.e. a string of outcomes $(x_1,\ldots,x_N)$. This is the outcome of a single run of the experiment. The \textit{likelihood} is the probability of obtaining $\omega_{1:N}$, as a function $\theta$, the parameter to be estimated:
\begin{equation}
    \ell(\tilde{\theta}) = \prob\left[\omega_{1:N}\vert\tilde{\theta}\right]\,.
\end{equation}
In MLE~\cite{Kay_1993, VanDerVaart_2000} the estimator for $\theta$ on a single run is taken as the maximum of the $\ell(\tilde{\theta})$ function over all values of $\tilde{\theta}$:
\begin{equation}
    \hat{\theta} = \argmax_{\tilde{\theta}} \ell(\tilde{\theta})\,.
\end{equation}
In other words, MLE assumes that the currently observed measurement record must come from the value of the parameter that makes it the most likely compared to other values of $\theta$. It is possible to prove that, under suitable regularity conditions, MLE is asymptotically unbiased and saturates the Cram\'er-Rao bound. 

In order to be able to perform the maximization, one has to be able to compute the likelihood $\ell(\tilde{\theta})$ for many different values of $\tilde{\theta}$. When considering long evolutions (that is, most situations of practical relevance in quantum metrology), we are faced with the problem that $\ell(\tilde{\theta})$ decays exponentially with the length of the measurement record. This renders the estimation task numerically unfeasible. In the context of finitely correlated processes (i.e., processes with a finite Markov order), the issue can be successfully tackled by maximizing the logarithm of the likelihood function, and exploiting the factorization properties of logarithms~\cite{Radaelli_2023}. However, quantum processes in general exhibit an infinite Markov order. Here, we show a method, based on the monitoring operator formalism, that circumvents this issue. An alternative but equivalent method has been proposed in Ref.~\cite{Gammelmark_2013}.

Given the measurement record $\omega_{1:N}$, and a candidate value for the parameter $\tilde{\theta}$, we can define a corresponding monitoring operator $\xi_{\omega_{1:N}}^{\tilde{\theta}}$, that is, the monitoring operator evolved along the measurement record $\omega_{1:N}$, assuming that the value of the parameter is $\tilde{\theta}$. In an analogous way, we can evolve the sub-normalized quantum state $\tilde{\rho}_{\omega_{1:N}}^{\tilde{\theta}}$ on $\omega_{1:N}$. The likelihood can then be written as
\begin{equation}
    \ell\left(\tilde{\theta}\right) = \prob\left[\omega_{1:N}\vert\tilde{\theta}\right] = \tr\left[\tilde{\rho}_{\omega_{1:N}}^{\tilde{\theta}}\right]\,.
\end{equation}
Let us now assume that, in the estimation region, the loglikelihood function has a single maximum. Note that for well-behaved loglikelihood functions this is not a strong constraint since one can always restrict the estimation region. If the maximum lies in the interior of the region, then it can be detected by imposing $\partial_\theta \ell(\theta)=0$. If the maximum lies on the border, it cannot be detected in this way. For this reason, it makes sense to check the loglikelihood function on the border separately, for instance with the method of Ref.~\cite{Gammelmark_2013}. We now focus on the case of a maximum inside the region.

The maximization condition yields
\begin{equation}
    \partial_{\tilde{\theta}} \tr\left[\tilde{\rho}_{\omega_{1:N}}^{\tilde{\theta}}\right] =0\iff \tr\left[\xi_{\omega_{1:N}}^{\tilde{\theta}}\right] =0\,.
\end{equation}
In this way, we mapped the problem of finding the $\tilde{\theta}$ that maximizes the likelihood to the one of finding the $\tilde{\theta}$ that makes the monitoring operator traceless. Remarkably, the latter is a numerically stable problem, because the evolution of the monitoring operator can be integrally computed from the normalized conditional state. For practical purposes, the problem can be further mapped onto finding the $\tilde{\theta}$ that minimizes the squared trace of the monitoring operator. To summarize:
\begin{equation}
    \hat{\theta} = \argmax_{\tilde{\theta}} \ell(\tilde{\theta}) = \argmin_{\tilde{\theta}} \left(\tr\left[\xi_{\omega{1:N}}^{\tilde{\theta}}\right]\right)^2\,.
\end{equation} 

\begin{figure*}[bt]
    \includegraphics[width=\textwidth]{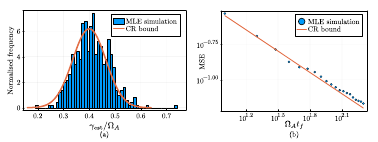}
    \caption{Estimation task results for the coupled qubit model. MLE estimation on 500 trajectories for a fixed final time~100 (a) yields normally distributed estimates; the variance is compatible with the prediction of the Cram\'er-Rao bound. (b): the variance of the MLE estimates is compared with the Cram\'er-Rao bound. As expected, perfect saturation is not reached for finite time, but the error on the estimates is comparable in order of magnitude with the bound. Parameters: $\Omega_A = \Omega_B = \bar{n} = 1$, $\gamma = 0.4$, $g = 0.01$, timestep $dt = 0.001$, $\gamma$ increment for the derivatives $d\gamma = 0.0001$.}
    \label{fig:MLE_CR_bound}
\end{figure*}

We remark that the monitoring operator formalism for MLE estimation is fully general, regardless of the specific form of the $\mathcal{M}_x$ superoperators. An application to the case of quantum jump trajectories is presented in Fig.~\ref{fig:MLE_CR_bound} for the coupled qubit model, whose details are discussed in Sec.~\ref{sect:coupled_qubits}. In the asymptotic limit of a large final time~$t_{f}$, MLE gives normally distributed estimates for the unknown parameter, and the variance of such distribution saturates the Cram\'er-Rao bound. Code for MLE estimation on quantum jump trajectories, using the monitoring operator formalism, is made available in Julia~\cite{implementation_Julia}. 

\subsection{The Fisher-Gillespie algorithm}
To compute the Fisher information according to Eq.~\eqref{eq:FI_monitoring_op}, one has to sample from the distribution of all the possible sequences of superoperators $\mathcal{M}_x$, weighted by the appropriate probability of the trajectory. In the case of quantum jumps, $\mathcal{M}_x = \mathcal{J}_k e^{\mathcal{L}_0\tau}$, so the time evolutions add an additional level of complexity as the superoperators are parameterized by a continuous variable, time, that can be arbitrarily subdivided.

To make this computation tractable, we introduce a new algorithm, which jointly exploits the monitoring operator formalism for the Fisher information, and the Gillespie approach~\cite{Gillespie_1976,Gillespie_1977} to quantum jump trajectories, introduced in Ref.~\cite{Radaelli_2023b}. Here, we give a quick overview of the working principles of the algorithm.

To sample from the distribution of all possible jump trajectories given by Eq.~\eqref{eq:trajectory_probability}, we assume that we start in a state $\rho_{j-1}$, immediately after a jump. We then sample a random waiting time $T_{j}$ from the distribution
\begin{equation}
    W(\tau|\rho_{j-1}) = \tr\left[\mathcal{J}e^{\mathcal{L}_0\tau} \rho_{j-1}\right]\,, 
    \label{eq:algorithm_WTD}
\end{equation}
where $\mathcal{J} = \sum_{k\in\obsSet} \mathcal{J}_k$. This gives the WTD for the next jump to occur, irrespective of the monitored channel it takes place in. 
We then update the state $\rho_{j-1}$ to 
\begin{equation}\label{eq:Gillespie_evolve_state}
    \bar{\rho}_j = \frac{e^{\mathcal{L}_0 T_j} \rho_{j-1}}{\tr(e^{\mathcal{L}_0 T_j} \rho_{j-1})}\,.
\end{equation}
Next we sample the jump channel $k_j$ from the probabilities 
\begin{equation}
    p(k|\bar{\rho}_j) = \frac{\tr(\mathcal{J}_{k}\bar{\rho}_j)}{\sum_q \tr(\mathcal{J}_{q}\bar{\rho}_j)}\,,
\end{equation}
and use the outcome $k_j$ to update the state to 
\begin{equation}
    \rho_j = \frac{\mathcal{J}_{k_j} \bar{\rho_j}}{\tr(\mathcal{J}_{k_j} \bar{\rho_j})}\,.
\end{equation}
We now continue in this way. At each step, the time $T_j$ is sampled from $W(\tau|\rho_{j-1})$ and the next jump $k_j$ is sampled from $p(k|\bar{\rho}_j)$. Combining these steps, the state is updated to 
\begin{equation}
    \rho_j = \frac{\mathcal{J}_{k_j}e^{\mathcal{L}_0 T_j} \rho_{j-1}}{\tr(\mathcal{J}_{k_j} e^{\mathcal{L}_0 T_j} \rho_{j-1})}\,.
    \label{eq:rho_dynamics_step}
\end{equation}
This gives a way to efficiently obtain the correct weight~\eqref{eq:trajectory_probability} for each trajectory. Now we have to consider the evolution of the monitoring operator on the same trajectory, according to Eq.~\eqref{eq:generic_MO_step}, where $\mathcal{M}_{x_j}=\mathcal{J}_{k_j} e^{\mathcal{L}_0 T_j}$,
\begin{equation}
    \xi_j = \frac{\partial_\theta\big(\mathcal{J}_{k_j} e^{\mathcal{L}_0 T_j}\big) \rho_{j-1} + \mathcal{J}_{k_j} e^{\mathcal{L}_0 T_j} \xi_{j-1}
    }{\tr(\mathcal{J}_{k_j} e^{\mathcal{L}_0 T_j} \rho_{j-1})}\,.
    \label{eq:monitoring_operator_dynamics_step}
\end{equation}
This equation can be read in direct comparison with the results of Ref.~\cite{Albarelli_2018}. In our case, however, the evolution is written in the Gillespie formalism, rather than the time-discretized picture. The FI is then obtained by repeating this process over sufficiently many trajectories to get the average~\eqref{eq:FI_monitoring_op}. 

Importantly, many of the numerically heavier objects involved do not depend on the specific trajectory and can be pre-computed. To do so, in analogy with the Gillespie algorithm of Ref.~\cite{Radaelli_2023b}, we define a list $\{T_\alpha\}$ of times, which has to (i)~be sufficiently fine-grained to be able to resolve all relevant features of $W(\tau|\rho)$ and (ii)~span times much larger than the average waiting times, in order to cover as faithfully as possible the structure of the WTD. At this point, we can pre-compute not only the superoperators $e^{\mathcal{L}_0\tau}$ required to evolve the state in Eq.~\eqref{eq:Gillespie_evolve_state}, but also the superoperators $\partial_\theta \mathcal{J}_{k}$ for all the jump channels and $\partial_\theta e^{\mathcal{L}_0 \tau}$, needed to evolve the monitoring operator $\xi_t$ according to Eq.~\eqref{eq:monitoring_operator_dynamics_step}.

If the initial state is pure, all jump operators are one-dimensional projectors, and all jump channels are monitored, then the conditional state remains a pure state for the entire duration of the dynamics. It is therefore possible to write its evolution in terms of the effective Hamiltonian $H_e = H - i/2 \sum_k \mathcal{J}_k$, with significant savings in terms of memory and computational time.

A pseudocode description of the Gillespie-Fisher algorithm in this simpler case is given in Algorithm~\ref{alg:Gillespie}; an implementation in Julia can be found in Ref.~\cite{implementation_Julia}. The latter allows both for pure state evolution and for mixed states, making also possible to consider the partial monitoring of specific channels.

At the beginning of the dynamics, $\xi_0 = \partial_\theta \rho_0$ [Eq.~\eqref{eq:xi_0}]; it follows that the monitoring operator is initially traceless (hence $F_0(\theta)=0$), in accordance with the fact that no information is available about $\theta$ at the beginning. However, the initial state may depend on $\theta$; e.g., one could be given the steady state of the system as the initial state, and the steady state ordinarily depends on the dynamical parameters. While this cannot play a role in the initial Fisher information, it may become relevant at later times, when the value of $\xi$ is modified by the subsequent application of the jump and no-jump superoperators. In principle henceforth all that follows may depend upon the specific choice of the initial state $\rho_0$. However, in cases wherein the information about the initial state is quickly lost during the time evolution, these effects eventually become vanishingly small.

Finally, we observe that the Fisher-Gillespie algorithm is, in effect, a Monte Carlo integration method for the Fisher information. By following the evolution, it allows to focus on the most relevant areas of the space of measurement records. Since the trajectories are generated independently of each other, the error in the evaluation of the Fisher information follows from the law of large numbers, scaling as~$1/\sqrt{M}$, where $M$ is the number of trajectories.

\subsection{Fisher information rate}
In this section, we discuss the Fisher information rate $dF/dt$ of the measurement record of any quantum jump process. For this, we need to work in the $t_f$-ensemble representation: the measurement record $\omega_{1:N}$ is equivalently represented by a sequence of discrete time steps of equal length $dt$. For each step $t$, a random variable $X_t$ can take values in the set $\{0,1,\ldots, |\obsSet|\}$, where the value 0 represents the absence of a jump in the step, and the other values index the possible jump channels. 

Given the Fisher information in the measurement record up to time step $T$, $F_\theta(X_{1:T})$, one can update it to the next time step $T+1$ via the chain rule for the Fisher information~\cite{Zegers_2015, Radaelli_2023}
\begin{equation}
    F_\theta(X_{1:T+1}) = F_\theta(X_{1:T}) + F_\theta(X_{T+1}|X_{1:T})\,,
\end{equation}
where $F(X|Y)$ represents the conditional Fisher information for $X$ given $Y$. The second term gives the variation of the Fisher information $dF_T$. By definition of the conditional Fisher information:
\begin{equation}
\begin{split}
    dF_T = & F(X_{T+1}|X_{1:T}) \\= & \sum_{x_{1:T+1}} \frac{\prob(x_{1:T})\left(\partial_\theta \prob(x_{T+1}|x_{1:T})\right)^2}{\prob(x_{T+1}|x_{1:T})}\,.
\end{split}
\end{equation}
The update probability $\prob(x_{T+1}=j|x_{1:T})$, for $j=1,\ldots,|\obsSet|$, is given by
\begin{equation}
\begin{split}
    \prob(x_{T+1}=j|x_{1:T}) = I^j_T dt\,,
    \label{eq:Fisher_rate}
\end{split}
\end{equation}
and $\prob(x_{T+1}=0|x_{1:T}) = 1 - \sum_{j=1}^{|\mathbb{M}|} I_{T}^{j} dt$, where $I_T^j\equiv \tr\left[L_j\rho^{(c)}_T L_j^\dagger\right]$ is the stochastic current in channel $j$ with corresponding jump operator $L_j$ and $\rho_T^{(c)}$ the conditional normalized state at time $T$
 Expanding the calculations (see Appendix~\ref{sect:FI_rate_average}), we find the Fisher information rate to be
\begin{equation}
    \frac{dF_T}{dt} = \sum_{k=1}^{|\obsSet|} \mathbb{E}\left[ \frac{1}{I_T^k}\left(\partial_\theta I_T^k \right)^2\right]\,,
    \label{eq:Fisher_information_rate_average}
\end{equation}
where the expectation value is taken with respect to all the trajectories up to time step $T$. Hence: the Fisher information rate is only a function of the individual dependence on $\theta$ of the currents; the interplay among stochastic currents of different channels does not play a role at first order. 

\section{Information loss and compression}
\label{sect:Information_postprocessing}

The measurement record $\omega_{1:N}$ in Eq.~\eqref{eq:measurement_record} represents all the information obtained from the quantum jump unraveling.
The Fisher information studied in the previous sections describes the best attainable precision, maximized over all estimator functions $\hat{\theta}(\omega_{1:N})$ that exploit the full data record. 
In practice, however, we may often not end up using the full data. 
For example, we might not have access to the jump symbols $k_i$, but only the time tags $\tau_i$, or vice versa.
Or we might want to use only a summary statistic, like the average waiting time.
Discarding part of the record in this fashion constitutes a form of information compression. 
Since we are throwing away information, the corresponding Fisher information must be smaller; i.e., the estimation precision will be reduced. 
This can be formalized by the data-processing inequality for the Fisher information~\cite{Zegers_2015, Zamir_1998, Ferrie_2014}: given any non-parameter-dependent map $\Phi$ acting on a trajectory $\omega_{1:N}$, 
\begin{equation}
    F_{\omega_{1:N}}(\theta) \geq F_{\Phi(\omega_{1:N})}(\theta)\,.
    \label{eq:data_processing_inequality}
\end{equation}
This naturally establishes a trade-off between estimation precision and memory compression.

In this section we explore the loss of FI for different choices of data compression. 
It turns out that most expressions derived in Secs.~\ref{sect:Multi_channel_renewal_processes} and~\ref{sect:the_WTD_for_non_renewal_processes} can be adapted to also encompass this possibility. 
The simplest case is that of partial monitoring; i.e., when one does not have access to all jump channels, or when they operate with finite efficiency. 
This case was already discussed around Eq.~\eqref{eq:jump_superoperators}, and simply amounts to adding an efficiency $\eta_k$ to each channel, or restricting the sum in Eq.~\eqref{eq:no_jump_op} to a subset $\obsSet$. 
One can think of the channel $\Phi$ in Eq.~\eqref{eq:data_processing_inequality} as an operation that removes entries with $k_i \notin \obsSet$ from $\omega_{1:N}$.
Partial monitoring therefore modifies the main superoperators $\mathcal{J}_k$ and $\mathcal{L}_0$. But the general structure of the trajectory distributions remain unchanged, and hence all formulas from Secs.~\ref{sect:Multi_channel_renewal_processes} and~\ref{sect:the_WTD_for_non_renewal_processes} still apply. 

Another example of data compression is when we have access only to channel symbols $k_i$, but not time-tags $\tau_i$, or vice-versa. 
Starting from the general trajectory probability~\eqref{eq:trajectory_probability} and marginalizing over all time-tags yields the symbol probability 
\begin{equation}
    \prob(k_1,\ldots,k_N) = \tr\big\{ M_{k_N} \cdots M_{k_1}\rho_0\big\},
\end{equation}
where $M_k = - \mathcal{J}_k \mathcal{L}_0^{-1}$. 
This statistics was recently studied in detail in Ref.~\cite{Landi_2023}. 
Renewal processes continues to be renewal in this case.
Hence, the corresponding FI will be given exactly by Eq.~\eqref{eq:Fisher_renewal_channels}. 
For non-renewal processes, on the other hand, one can still apply the monitoring operator formalism in Eq.~\eqref{eq:generic_MO_step}, which remains largely unchanged. All one has to do is replace  $\mathcal{M}_x$ with $M_k$.

If we instead marginalize~\eqref{eq:trajectory_probability} over the jump channels we obtain 
\begin{equation}\label{eq:trajectory_probability_no_channel_symbols}
    \prob(\tau_1,\ldots,\tau_N) = \tr\big\{ \mathcal{J} e^{\mathcal{L}_0 \tau} \cdots \mathcal{J} e^{\mathcal{L}_0 \tau}\rho_0\big\},
\end{equation}
where $\mathcal{J} = \sum_{k\in \obsSet} \mathcal{J}_k$. 
Notice how this result still depends on the choice of monitored channels $\obsSet$, which influences both $\mathcal{J}$ and $\mathcal{L}_0$.
Eq.~\eqref{eq:trajectory_probability_no_channel_symbols} therefore assumes one does not know which specific channel clicked, but one knows it belongs to $\obsSet$. 
In general, Eq.~\eqref{eq:trajectory_probability_no_channel_symbols} will no longer be renewal, even if the original process was.
The reason is because the renewal condition~\eqref{eq:renewal_condition} is not, in general, satisfied for $\mathcal{J}$. 
Nonetheless, the monitoring operator formalism in Eq.~\eqref{eq:generic_MO_step} remains perfectly valid, provided one replaces $\mathcal{M}_x \to \mathcal{J} e^{\mathcal{L}_0 \tau}$. 

So far, we have assumed compression schemes in which one has limited access to one aspect of the dataset or another, and which are specific to the form of the jump trajectory outcomes. 
Another important compression is to consider instead only summary statistics, analyzed in detail in Ref.~\cite{Smiga_2023}. In order to draw a relation, we here add an explanation of the effect of using such summary statistics for data originating from quantum jump trajectories. The broader case, independent of quantum jump dynamics, is discussed in Ref.~\cite{Smiga_2023}.

Suppose the process has a single jump channel. 
Instead of considering estimators $\hat{\theta}(\tau_1,\ldots,\tau_N)$ that are arbitrary functions of the entire dataset, one might consider estimators $\hat{\theta}(T)$ that depend only on the sample mean 
\begin{equation}
    T = \frac{\tau_1+\cdots+\tau_N}{N}. 
\end{equation}
This is another form of data compression and therefore must necessarily lead to a smaller Fisher information. 
To calculate this FI one must take into account the fact that the waiting times can be correlated with each other. 
As shown in~\cite{Radaelli_2023,Smiga_2023} the asymptotic FI, for large $N$, will be 
\begin{equation}\label{eq:FI_sample_mean}
    F_T(\theta) \simeq N \left(\frac{\partial\mu}{\partial\theta}\right)^2 \frac{1}{\sigma^2+ 2 \sum_{i=1}^{N-1} C_i},
\end{equation}
where $\mu = \mathbb{E}[\tau_i]$ and $\sigma^2 = \var[\tau_i]$ are the mean and variance of the individual waiting times (assumed stationary), while $C_{i-j} = {\rm cov}(\tau_i,\tau_j)$ is the covariance between two waiting times. Note that~$C$ is represented with a single index because stationarity (time-translation invariance) is assumed. For renewal processes the waiting times become i.i.d.~and $C_i = 0$. 
In this case, the WTD will be given by Eq.~\eqref{eq:wtd_renewal}: $W(\tau) = \tr\{ \mathcal{J} e^{\mathcal{L}_0\tau} \sigma\}$. 
And the mean and variance in Eq.~\eqref{eq:FI_sample_mean} can be computed from the Laplace transform 
\begin{equation}
    \tilde{W}(s) = \int_0^\infty d\tau~W(\tau) e^{-s \tau} = \tr\big\{\mathcal{J} (s-\mathcal{L}_0)^{-1} \sigma\big\},
\end{equation}
as $\mu = - \tilde{W}'(0)$ and $\sigma^2 = \tilde{W}''(0) - \mu^2$. 
Conversely, for non-renewal processes $C_i \neq 0$ and the correlations will play a non-trivial role, since the second term in the denominator of Eq.~\eqref{eq:FI_sample_mean} can be positive or negative, depending on the model. 
This is explored in more details in Refs.~\cite{Radaelli_2023,Smiga_2023}.

\section{Examples}\label{sec:examples}
In this section, we consider several examples to illustrate concepts explored throughout the paper. The examples are summarized in Table~\ref{tab:ex_summary}.

\begin{table}[htbp]

    \centering
    \begin{tabular}{llcc}
        \hline\hline 
        Example & & Renewal & Multi-channel\\
        \hline
        Q. therm. & \ref{sec:qubit_thermometry} & X & X \\
        Res. fluor. & \ref{sec:resonant_fluorescence} & X &  \\
        Coupled q. & \ref{sect:coupled_qubits} &  &  \\
        Micromaser & \ref{sec:maser} &  & X \\
        \hline\hline
    \end{tabular}
        \caption{Summary of presented examples.}
    \label{tab:ex_summary}
\end{table}
\vspace{-2pt}

\subsection{Qubit thermometry} \label{sec:qubit_thermometry}

We consider a qubit coupled to a thermal bath. The Hamiltonian of the isolated qubit system is
\begin{equation}\label{eq:qubit_H}
    H = \frac{\omega}{2} \sigma_z + \frac{\Omega}{2}\sigma_x\,,
\end{equation}
where $\omega$ is the detuning and $\Omega$ is the Rabi frequency. The coupling with the thermal bath is described by two jump operators, of the form
\begin{subequations}\label{eq:qubit_Ls}
    \begin{align}
        L_+ &= \sqrt{\gamma \bar{n}} \sigma_+ \\
        L_- &= \sqrt{\gamma (\bar{n}+1)} \sigma_-\,,
    \end{align}
\end{subequations}
where $\gamma$ is the emission rate and $\bar{n}$ is the average occupation number of the relevant mode in the environment; the value of $\bar{n}$ is connected to the temperature of the environment via the Bose-Einstein distribution. We assume that both the jumps $L_+$ and $L_-$ are observable, and we would like to estimate the parameter $\bar{n}$, of interest in thermometry applications. 
This process is renewal. \\

\vspace{-12pt}

The Fisher information about $\bar{n}$ contained in the measurement record can be computed as in Eq.~\eqref{eq:Fisher_renewal}, and yields
\begin{widetext}\begin{align}
    F_{1:N}(\bar{n}) =& \frac{N}{2\bar{n}(\bar{n}+1)} \left[ \frac{(\bar{n}+1)^2 + \bar{n}^2}{\bar{n}(\bar{n}+1) + \frac{(\Omega/\gamma)^2}{2} \frac{1}{1+4\left( \frac{\omega/\gamma}{2\bar{n}+1} \right)^2}}+\frac{(\Omega/\gamma)^2}{\bar{n}(\bar{n}+1)\left( 1+4\left( \frac{\omega/\gamma}{2\bar{n}+1} \right)^2 \right) + \frac{(\Omega/\gamma)^2}{2}} \right]\,.
\end{align}\end{widetext}
WTD and Fisher information are shown in Fig.~\ref{fig:FI_qubit_thermometry}. 

\begin{figure}[hbt]
\centering
\includegraphics[width=\linewidth]{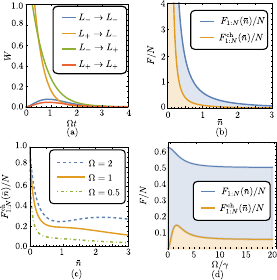}
    \caption{Thermometry with a qubit with observed raising and lowering operations. (a)~The WTD for different observations. Parameters: $\omega=\Omega=\gamma=1$ and $\bar{n}=1.5$. (b)~The Fisher information per observed jumps. Parameters: $\omega=\Omega=\gamma=1$. The total rate (blue) is shown with the contribution from the channels $F^{\rm ch}$ alone (orange). (c)~Proportion of total Fisher information in the channels, alone, for different Rabi oscillations $\Omega$ and $\omega=\gamma=1$. (d)~The Fisher information per observed jumps as a function of~$\Omega$, same parameters as in (b).}
    \label{fig:FI_qubit_thermometry}
\end{figure}

A special remark is in order for the limit $\Omega\to 0$ (see Fig.~\ref{fig:FI_qubit_thermometry}d). In this case, the jump operators $L_+$ and $L_-$ satisfy the conditions for an exponential WTD discussed in Sec.~\ref{sect:particular_cases} (the final states are orthogonal among each other, and the Hamiltonian commutes with the density matrix of the final states), as specified in the previous section. The only non-zero WTDs are
\begin{subequations}
\begin{align}
    W(\tau, L_+|L_-) &= \gamma \bar{n} e^{-\gamma \bar{n} \tau}\,; \\
    W(\tau, L_-|L_+) &= \gamma (\bar{n} + 1) e^{-\gamma (\bar{n}+1)\tau}\,.
\end{align}
\end{subequations}
The Fisher information considerably simplifies, yielding
\begin{equation}
    F_{1:N}(\bar{n},\Omega=0) = \frac{N}{2}\left[ \frac{1}{\bar{n}} + \frac{1}{\bar{n}+1} \right]\,.
\end{equation}
If $\Omega \to 0$, the jump channels have to alternate between $L_+$ and $L_-$ (no two jumps of the same kind can happen in a row). Therefore, no information about $\bar{n}$ can be contained in the sequence of the jump channels, when one ignores the jump times: $F_{1:N}(\bar{n}) = F^{\rm times|ch}_{1:N}(\bar{n})$.

The relative contribution of $F^{\rm ch}_{1:N}$ to the total Fisher information is illustrated in Fig.~\ref{fig:FI_qubit_thermometry}b--c. One finds that, with all else equal, the sequence of observed channels dominates the Fisher information of the temperature for a cold system ($\bar{n}\to 0$) but is generally less important as temperature increases. Further, one observes that $F^{\rm ch}_{1:N}\to 0$ as $\Omega\to0$ (for finite temperature, $\bar{n}>0$), reflecting the fact that $F_{1:N}(\bar{n}) = F^{\rm times|ch}_{1:N}(\bar{n})$ in this case. 

\subsection{Resonant fluorescence} \label{sec:resonant_fluorescence}

Consider again Eqs.~\eqref{eq:qubit_H} and~\eqref{eq:qubit_Ls}, but assume $\omega = \bar{n} = 0$, so only emissions, $L = \sqrt{\Gamma} \sigma_-$, are observed. 

The expression for the Fisher information w.r.t. the Rabi frequency $\Omega$ has been shown to be~\cite{Kiilerich_2014}
\begin{equation}
    F(\Omega) = N \left(\frac{8}{\Gamma^2} + \frac{4}{\Omega^2}\right)\,,
\end{equation}
where $N$ is the number of recorded jumps. The computation of the Fisher information of the sample mean via Eq.~\eqref{eq:FI_sample_mean} yields
\begin{equation}
    F_{T}(\Omega) = \frac{N}{\Omega^2}\frac{4}{1 - 2 (\Omega/\Gamma)^2+4 (\Omega/\Gamma)^4}\,.
\end{equation}
In Fig.~\ref{fig:sample_mean}, we represent the Fisher information for the entire measurement record, and for the sample mean for resonant fluorescence. We notice how, in general, the Fisher information of the sample mean is not comparable to the one of the full measurement record; while the former lower bounds the latter, the bound is clearly far from tight.

\begin{figure}[htbp]
    \centering
    \includegraphics[width=\linewidth]{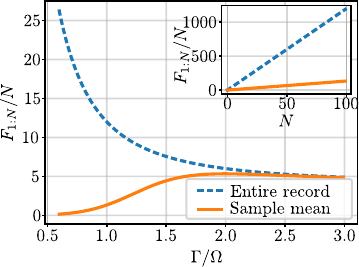}
    \caption{Fisher information of the entire measurement record, and of the sample mean for the resonant fluorescence process. In the main panel, the Fisher information rate for the Rabi frequency $\Omega = 1$ is represented as a function of the jump rate $\Gamma$; in the inset plot, for fixed $\Gamma = \Omega = 1$, as a function of $N$, the number of recorded jumps.}
    \label{fig:sample_mean}
\end{figure}

\subsection{Coupled qubits} \label{sect:coupled_qubits}
\begin{figure}[htbp]
    \centering
    \includegraphics[width=\linewidth]{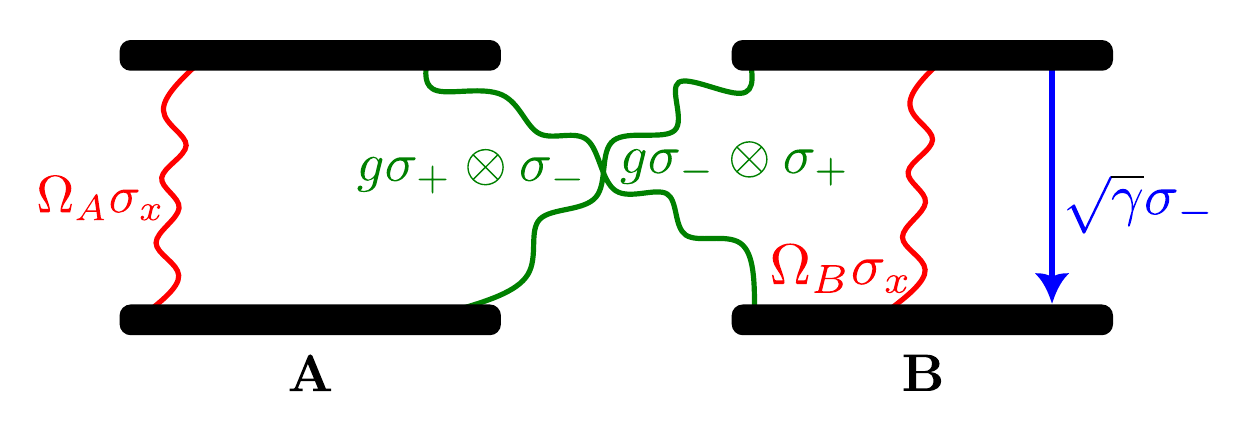}
    \caption{A depiction of the coupled-qubit model.}
    \label{fig:coupled_qubit_schematic}
\end{figure}
As a paradigmatic case of a non-renewal process, we consider two qubits, both undergoing Rabi oscillations, interacting with an exchange Hamiltonian, and such that one of them can ``leak'' excitations into a thermal bath. The Hamiltonian of the system has the form
\begin{equation}
\begin{split}
    H  = & \Omega_A\sigma_x^A + \Omega_B\sigma_x^B + \omega_A \sigma_z^A + \omega_B\sigma_z^B \\& + g \left(\sigma_+^A \sigma_-^B + \sigma_-^A \sigma_+^B\right)\,,
\end{split}
\end{equation}
where $\Omega_{A,B}$ are Rabi frequencies, $\omega_{A,B}$ are detuning terms, and $g$ parametrizes the coupling strength. A tensor product with the identity is implied where not otherwise specified. There is only one jump operator, given by the emission from qubit $B$ into the bath; the corresponding jump operator is
\begin{equation}
    L = \sqrt{\gamma} \sigma_-^B\,.
\end{equation} 

The model is depicted in Fig.~\ref{fig:coupled_qubit_schematic}, and an example of a trajectory is given in \ref{fig:coupled_qubits_model}a. It is clear that the model is non-renewal: while at each jump the state of qubit $B$ is completely reset to the ground state, the state of qubit $A$ is affected by the jump (which breaks all the entanglement between the qubits created by the Hamiltonian evolution), but not reset: there is quantum memory carried on at each jump.

We used the Fisher-Gillespie algorithm to compute the Fisher information of the measurement record for the estimation of $\gamma$ in the coupled qubit model. The result is plotted in Fig.~\ref{fig:coupled_qubits_model}. It is apparent that the Fisher information scales asymptotically linearly with time; this is expected for any stochastic process whose correlations decay with time~\cite{Radaelli_2023}.

In Fig.~\ref{fig:coupled_qubits_model}c, we also highlighted one of the stochastic trajectories of $\tr[\xi_t]^2$. Remarkably, while the Fisher information (given by $\mathbb{E}[\tr[\xi_t]^2]$) always monotonically increases with time (up to small statistical fluctuations), on individual trajectories a decrease can be observed. In particular, the squared trace of the monitoring operator can be interpreted as an indicator of how atypical a trajectory is. Indeed, on a typical trajectory one would expect MLE to retrieve a very close estimate~$\hat{\theta}$ to~$\theta$. By the MLE condition,~$\text{Tr}[\xi_{t}]^{2}=0$. Hence, we expect $\text{Tr}[\xi_{t}]^{2}$ to be small for typical trajectories, and much larger for atypical trajectories.

\vspace{48pt}

\begin{figure*}[tp]
    \centering
    \includegraphics[width=0.75\textwidth]{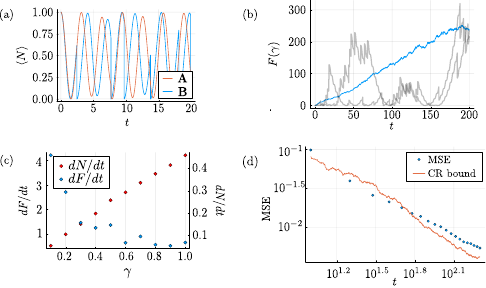} 

    \caption{(a): a stochastic quantum trajectory for the two-qubit model, in terms of the expectation value of the number operator for each qubit. Note that, whereas qubit $B$ undergoes a complete reset after each jump, qubit $A$, while affected by the jump, is not reset: this behavior clearly expresses the non-renewal nature of the process. To highlight this effect, the plot is obtained for an increased value~$g = 0.1$, with respect to the parameters below. (b): the result of a Gillespie-Fisher algorithm calculation of the Fisher information for $\gamma$ along the jump process. The average on the 2000 trajectories is represented by the red line. Some of the trajectories for $\tr[\xi_t]^2$ are represented in the background. (c): the asymptotic Fisher information rate for $\gamma$, as a function of~$\gamma$ itself, calculated via a linear fit of the Fisher information, represented along side the average jump rate. (d): mean squared error of MLE estimation for~$\gamma$, compared with the numerical Cramér-Rao bound for 2000 trajectories. (Parameters: $\gamma = 0.4$, $\bar{n}_{th} = 1$, $\Omega_{A} = 1$, $\Omega_{B} = 1$, $g = 0.01$)}
    \label{fig:coupled_qubits_model}
\end{figure*}

\subsection{Single-atom maser}
\label{sec:maser}

\begin{figure}[htbp]
    \centering
    \includegraphics[width=0.8\linewidth]{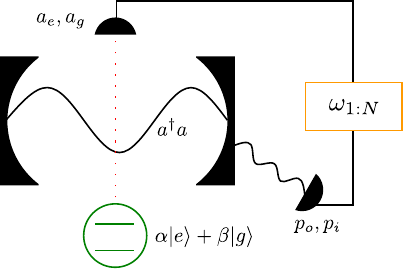}
    \caption{Illustration of the single-atom maser model. A beam of two-level atoms passes through a cavity where each atom interacts with the electromagnetic field. The energy of the atoms is measured when it exits the cavity. The cavity itself is coupled with the environment modeled as a thermal bath.}
    \label{fig:schematic_micromaser}
\end{figure}

In this example, we apply the previously discussed concepts to the study of a model of relevant physical significance, the single-atom maser or micromaser~\cite{Scully_1997}, depicted in Fig.~\ref{fig:schematic_micromaser}. While being developed in the last decades of the past century, the model is still the object of intensive study~\cite{Korkmaz_2023, Feyisa_2023, Mikhalychev_2022}, and remains paradigmatic in the context of collision models. Physically, it consists of an optical cavity of high quality factor, across which a beam of highly excited Rydberg atoms passes. The typical energy of the excited states of those atoms with respect to their ground state is of the order of tens of GHz, i.e., in the microwave regime.

\begin{figure*}[bt]
    \centering
    \includegraphics[width=\textwidth]{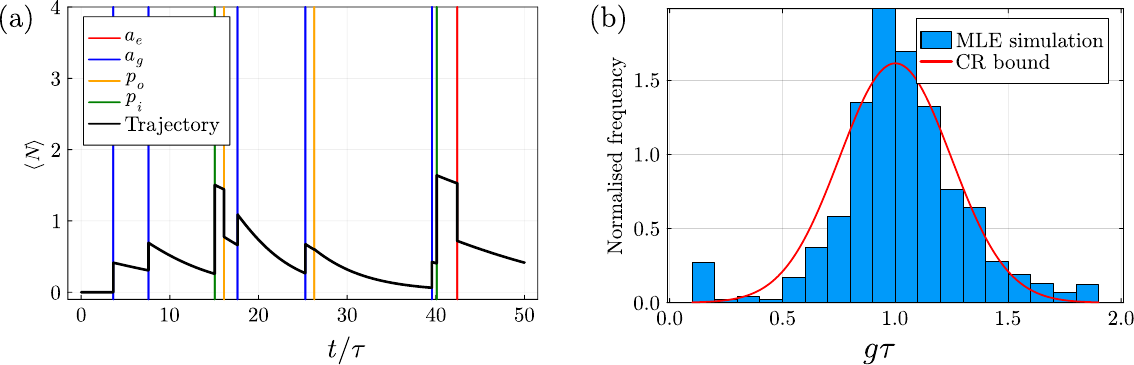}
    \caption{(a) Example of a trajectory for the number operator on the cavity, the different jump channels are highlighted. (b) Results of an MLE estimation task for $g$, compared with the Cramér-Rao bound obtained via the Fisher-Gillespie algorithm. Note that, formally, the convergence to a normal probability distribution is only guaranteed for an asymptotically large number of data points. Here, the convergence can already be observed for the final time considered for this plot. Parameters: $g = 1$, $\tau=1$, $\theta = \pi/4$, $\bar{n}_{\rm th} = 0.1$, $\gamma = 0.1$, e.m. field capped at the first 5 levels. Simulation obtained with 1000 trajectories, capped at final time $100\tau$.}
    \label{fig:drawing_micromaser}
\end{figure*}

A beam of two-level atoms, all in the same initial state $\ket{\psi}= \alpha\ket{e} + \beta \ket{g}$, fly through a cavity. The beam is sparse enough that, on average, there is at most one atom inside the cavity. The atom interacts with the electromagnetic field inside the cavity via the Jaynes-Cummings Hamiltonian
\begin{equation}
    H = \frac{g}{2}\left[a  \sigma_+ + a^\dagger  \sigma_-\right]\,,
    \label{eq:Jaynes_Cummings}
\end{equation}
where $a$ is the annihilation operator for the single-mode field. As above, $\sigma_+ = \ket{e}\bra{g}$, $\sigma_- = \ket{g}\bra{e}$; $g$ is the coupling constant. This Hamiltonian acts while the atom is inside the cavity, for time $\tau$. Upon exiting the cavity, the atom is measured in the energy basis (projective measurement with $\ket{e}\bra{e}$ and $\ket{g}\bra{g}$). 

The cavity itself is coupled to the environment, represented by a thermal bath; hence, photons leak from the cavity to the environment and enter the cavity from the environment.

In the analytical treatment of the micromaser model, one usually considers only the degrees of freedom of the field in the cavity, while the atomic ones are traced out. In this case, there are four possible incoherent jump channels, which we label as:
\begin{itemize}
    \item $a_e$ and $a_g$, corresponding to the atom exiting the cavity being detected respectively in the excited and the ground state;
    \item $p_i$ and $p_o$, corresponding to a photon entering the cavity from the thermal bath or being emitted from the cavity to the thermal bath, respectively. 
\end{itemize}
We can obtain the jump operators corresponding to the atomic measurements by tracing out the atomic d.o.f. from the unitary corresponding to the Jaynes-Cummings Hamiltonian~\cite{Scully_1997}. Denoting $\mathbf{\tilde s}=\sqrt{a^\dag a}$ and $\mathbf{s}=\sqrt{a^\dag a+1}$ this yields:
\begin{subequations}
\begin{align}
    L_{a_e} &= \alpha \cos\left(g\tau \mathbf{s}\right) -  i \beta \sin\left(g\tau\mathbf{s}\right)\mathbf{s}^{-1} a\,,\\
    L_{a_g} &= \alpha a^\dagger \sin\left(g\tau\mathbf{s}\right)\mathbf{s}^{-1} + i \beta \cos\left(g\tau\mathbf{\tilde s}\right)\,.
\end{align}
\end{subequations}

The jump operators corresponding to the absorption by and emission from the cavity are given by
\begin{subequations}
\begin{align}
    L_{p_i} &= \sqrt{\gamma \bar{n}_{\rm th}} a^\dagger\,, \\
    L_{p_o} &= \sqrt{\gamma (\bar{n}_{\rm th} + 1)} a\,,
\end{align}
\end{subequations}
where $\bar{n}_{\rm th}$ is the average number of photons given by the Bose-Einstein distribution for the specific temperature of the bath, and $\gamma$ is the emission rate. 

This model is clearly non-renewal, as no jumps are able to reset the state to a fixed post-jump state. An example of a quantum trajectory for the model is shown in the right panel of Fig.~\ref{fig:drawing_micromaser}.

The increased calculation efficiency given by the Gillespie-Fisher algorithm makes studying the Fisher information of a physical system such as the single-atom maser much more accessible. In particular, we are interested in the effect of the initial quantum state of the flying atoms on the Fisher information rate for the estimation of the coupling constant $g$. Let us define $\theta =\arccos(|\alpha|)$, so that $\ket{\psi} = \cos\theta \ket{e} + \sin\theta \ket{g}$, where we set the relative phase to zero. The behavior of the Fisher information rate for $g$, as a function of $\theta$, is shown in Fig.~\ref{fig:relevance_theta}.

\begin{figure}[htbp]
    \centering
    \includegraphics[width=\columnwidth]{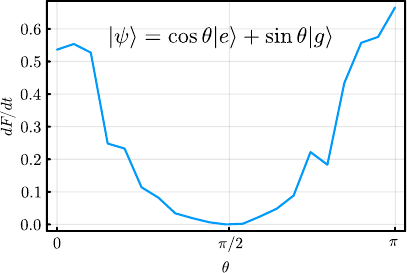}
    \caption{The Fisher information rate, as a function of $\theta$, for the single-atom maser system. For each $\theta$, the simulations were performed with $g=1$, $\tau = 1$,
     $\gamma=0.1$, $\bar{n}_{\rm th}=0.1$, the e.m. field limited to the first~5 levels, 100~trajectories, a time interval of $dt=0.001$, and $g$ variation for the derivatives $dg=0.01$.
     }
    \label{fig:relevance_theta}
\end{figure}

We observe that the highest Fisher information rate is obtained for $\theta\sim 0$ and $\theta \sim \pi$, corresponding to the incoming atoms being almost completely in the energy eigenstate $\ket{e}$. In contrast, the Fisher information rate appears to vanish for $\theta \sim \pi/2$, corresponding to the incoming atoms being almost completely in state $\ket{g}$, meaning that, in this regime, negligible information about $g$ is contained in the measurement record. The state of the field is such that there is on average much less than one photon inside the cavity:
\begin{equation}
    \ket{f} \sim (1-\varepsilon) \ket{0} + \varepsilon \ket{1}\,,
\end{equation}
with $\varepsilon \ll 1$. Applying the unitary evolution given by the Jaynes-Cummings Hamiltonian in Eq.~\eqref{eq:Jaynes_Cummings} at first order, $U = \mathbb{I} - i H \tau$, one obtains a post-interaction state
\begin{equation}
    \ket{\Psi} = \left(\mathbb{I} - iH\tau\right)\ket{f} \otimes \ket{\psi}\,.
\end{equation}
If $\ket{\psi}=\ket{g}$, then $\ket{\Psi} = (1-\varepsilon)\ket{0}\otimes\ket{g} + \varepsilon g/2 \ket{0}\otimes\ket{e}$, and we observe that the term bearing the $g$-dependence (hence the Fisher information) is suppressed in the $\varepsilon \to 0$ limit. On the contrary, if $\ket{\psi} = \ket{e}$, then $\ket{\Psi} \sim \ket{0} \otimes \ket{e} - i g\tau/2 \ket{1}\otimes\ket{g}$ in the $\varepsilon\to 0$ limit: the $g$-dependence is not suppressed. In other words, since there are almost always zero photons in the cavity field, some meaningful interaction can be obtained only if there is a significant overlap between the state of the incoming atom and $\ket{e}$, otherwise no interaction takes place, and the atom flies out of the cavity in the very same state in which it flew in. Since the $g$-dependence is only picked up if the interaction happens, atoms starting in state $\ket{g}$ will not allow for any information about $g$ to be collected.

\section{Conclusions}
In this work, we provided a complete framework for the estimation of parameters encoded in quantum jump processes, applicable in scenarios where an experimenter has access to jump times and multiple jump channels.
First, we showed the asymptotic behavior of the Fisher information for multi-channel renewal processes in terms of the waiting time distribution, generalizing  the results presented in Ref.~\cite{Kiilerich_2014}, and discussing the partition of information between jump channels and jump times.
We then discussed the more intricate case of non-renewal processes. Our treatment is based on the monitoring operator formalism, first introduced in Ref.~\cite{Gammelmark_2013}, and computationally enhanced in Ref.~\cite{Albarelli_2018}. Here, we introduced a new evolution method, the Gillespie-Fisher algorithm based on the quantum Gillespie method~\cite{Radaelli_2023b}, which allows for an efficient calculation of the Fisher information on quantum trajectories in many cases of interest. As we showed, the Fisher information rate can be decomposed in terms of the stochastic currents, elucidating how it evolves in time. We showed how the monitoring operator can be employed to perform maximum likelihood estimation on quantum jump trajectories in a numerically stable way. 
To understand the behavior of the Fisher information under classical post-processing of the measurement record we provided an analysis based on the the data-processing inequality.
Finally, we presented a number of physical examples, of different levels of physical complexity, to showcase our findings.

These results provide a comprehensive set of tools for describing metrology of quantum jumps in any platform. 

In recent years, metrological techniques based on the Fisher information have been extensively exploited in quantum thermodynamics, for example in the context of thermodynamic and kinetic uncertainty relations~\cite{Hasegawa_2019,VanVu_2022,Moreira_2025}. Our results, therefore, constitute the ideal toolbox for the computation of these inequalities in quantum jump processes.

We expect these tools to have a meaningful impact for developing new experiments --- such as developing novel quantum sensors --- as well as theoretical work for understanding optimal methods for encoding information.

\section*{Acknowledgments}
The authors thank Gerard Milburn for suggesting the single atom maser example, and Yoshihiko Hasegawa for insightful discussions about the waiting time distribution. MR thanks the ToCQS group at Trinity College Dublin for access to their computational resources, and the Irish Centre for High-End Computing (ICHEC) for the provision of computational facilities and support. The research conducted in this publication was funded by the Irish Research Council under grant numbers IRCLA/2022/3922 and GOIPG/2022/2321. This publication was made possible through the support of Grant 62423 from the John Templeton Foundation. The opinions expressed in this publication are those of the author(s) and do not necessarily reflect the views of the John Templeton Foundation.

\bibliographystyle{quantum}
\bibliography{bibliography}

\clearpage
\appendix

\begin{widetext}

\section{Form of the jump operators for renewal processes}
\label{sect:form_of_the_jump_operators_for_renewal_processes}
In this appendix, we prove that, in a renewal master equation in which all the jumps are monitored, the jump operators have to have a specific form. In particular, let $L_k$ be a jump operator satisfying the renewal condition
\begin{equation}
    \frac{L_k\rho L_k^\dagger}{\tr\left[L_k \rho L_k^\dagger\right]}=\sigma_k\,,
\end{equation}
where $\sigma_k$ does not depend on $\rho$. Then, we prove that $L_k = c_k \ket{\mu_k}\bra{\nu_k}$, where $c_k=\sqrt{\gamma_k}$.

Let us perform a singular value decomposition of the operator $L_k$:
\begin{equation}
    L_k = \sum_\alpha c_{k,\alpha} \ket{\mu_{k,\alpha}}\bra{\nu_{k,\alpha}}\,,
\end{equation}
where $\{c_{k,\alpha}\}$ are the singular values of $L_k$, and $\{\ket{\mu_{k,\alpha}}\}$ and $\{\ket{\nu_{k,\alpha}}\}$ are, respectively, the orthonormal bases of the left-singular and right-singular vectors. 

The ratio defining the renewal condition can be written as
\begin{equation}
    \frac{L_k\rho L_k^\dagger}{\tr[L_k\rho L_k^\dagger]} = \sum_{\alpha \beta} \frac{c_{k,\alpha}c_{k,\beta}^* \ket{\mu_{k,\alpha}}\bra{\nu_{k,\alpha}}\rho \ket{\nu_{k,\beta}}\bra{\mu_{k,\beta}}}{\sum_a |c_{k,a}|^2 \bra{\nu_{k,\alpha}}\rho\ket{\nu_{k,\alpha}}}  = \sigma_k\,.
\end{equation}
We observe that, in order for $\sigma_k$ not to depend on $\rho$, only one of the singular values $\{c_{k,\alpha}\}$ can be non-zero. It follows that the jump operator $L_k$ can be written in the simpler form
\begin{equation}
    L_k = c_k \ket{\mu_k}\bra{\nu_k}\,.
\end{equation}
By plugging this equation again in the definition of renewal jump, we obtain that $\sigma_k = \ket{\mu_k}\bra{\mu_k}$. This implies that, in a renewal process in which all jumps are monitored, the destination state of every jump has to be a pure state. 

\section{Fisher information for renewal processes}
\label{sect:Fisher_information_for_renewal_processes}
In this Appendix, we prove the expression for the Fisher information of renewal processes, Eq.~\eqref{eq:Fisher_renewal}. We first show an intermediate result on the dynamical activity of channels.

\begin{lemma}
    Let $\rho$ be the steady state of a GKSL master equation, such that $\mathcal{L}\rho =0$. Let $f_k = \tr[\mathcal{J}_k\rho]$ be the dynamical activity of channel $k$ and define 
    \begin{equation}\label{eq:p_k_appendix_markov}
        p_k = \frac{f_k}{\sum_j f_j}\,.
    \end{equation}
    For renewal processes, $p_k$ (or $f_k$) is the steady state of the Markov chain 
    \begin{equation}
        \sum_{i \in \obsSet} p(j|i)f_i = f_j\,,
        \label{eq:steady_state_Markov_chain}
    \end{equation}
    where
    \begin{equation}
        p(j|i) = \int_0^\infty d\tau \tr\left[\mathcal{J}_j e^{\mathcal{L}_0\tau}\sigma_i\right] = \int_0^\infty d\tau W(\tau,j|i)
    \end{equation}
    is the probability of having a jump in channel $j$ after a jump in channel $i$, with no other jumps in between, regardless of how much time elapses. 
\end{lemma}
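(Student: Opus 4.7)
The plan is to leverage the Lindblad steady-state condition together with the renewal factorization of $\mathcal{J}_k$ and the closed-form expression~\eqref{eq:renewal_markov_transition} for the transition kernel $p(j|i)$.

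First, I would split the full Liouvillian as $\mathcal{L}=\mathcal{L}_0+\sum_{k\in\obsSet}\mathcal{J}_k$ and apply it to the steady state, obtaining
\begin{equation}
    \mathcal{L}_0\rho = -\sum_{k\in\obsSet}\mathcal{J}_k\rho,
\end{equation}
which, under the assumption that $\mathcal{L}_0$ is invertible (already invoked in the main text just below Eq.~\eqref{eq:renewal_markov_transition} to exclude dark subspaces), rearranges to
\begin{equation}
    \rho = -\mathcal{L}_0^{-1}\sum_{k\in\obsSet}\mathcal{J}_k\rho.
\end{equation}

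Next I would exploit the renewal condition~\eqref{eq:renewal_condition}, which for the particular state $\rho$ gives $\mathcal{J}_k\rho=f_k\,\sigma_k$ with $f_k=\tr[\mathcal{J}_k\rho]$. Substituting into the previous expression yields $\rho = -\sum_{k\in\obsSet} f_k\,\mathcal{L}_0^{-1}\sigma_k$. Applying $\mathcal{J}_j$ to both sides and taking the trace then gives
\begin{equation}
    f_j = \tr[\mathcal{J}_j\rho] = \sum_{i\in\obsSet} f_i\,\bigl(-\tr\{\mathcal{J}_j\mathcal{L}_0^{-1}\sigma_i\}\bigr) = \sum_{i\in\obsSet} p(j|i)\,f_i,
\end{equation}
where in the final step I identify the bracketed quantity with $p(j|i)$ using Eq.~\eqref{eq:renewal_markov_transition}; equivalently, $\int_0^\infty e^{\mathcal{L}_0\tau}d\tau=-\mathcal{L}_0^{-1}$ (valid because all eigenvalues of $\mathcal{L}_0$ have strictly negative real parts by the invertibility assumption) converts the $\tau$-integral of the WTD into a matrix element of $-\mathcal{L}_0^{-1}$. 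This establishes the stationarity condition~\eqref{eq:steady_state_Markov_chain} for the unnormalized vector $f_k$, and dividing through by $\sum_j f_j$ transfers the same identity to $p_k$ in~\eqref{eq:p_k_appendix_markov}.

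The main subtlety, rather than an obstacle, is verifying that the integral-to-inverse identity $\int_0^\infty e^{\mathcal{L}_0\tau}d\tau=-\mathcal{L}_0^{-1}$ is legitimate. This rests on the spectral properties of $\mathcal{L}_0$, which are precisely those already assumed in the main text to guarantee the existence of $\mathcal{L}_0^{-1}$ and the proper normalization $\sum_i p(j|i)=1$ of the transition kernel; no further argument is needed beyond citing that assumption. One should also note the logical content of the lemma: it tells us that $p_k$ of Eq.~\eqref{eq:renewal_pk}, which is \emph{defined} as a dynamical activity ratio in the quantum steady state, simultaneously equals the stationary distribution of the embedded classical Markov chain on jump labels; this is the bridge that lets subsequent appendices reduce the Fisher information of the trajectory to a classical averaging over $p_k$.
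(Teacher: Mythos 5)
Your proposal is correct and follows essentially the same route as the paper's proof: both rely on the invertibility of $\mathcal{L}_0$ to write $p(j|i)=-\tr\{\mathcal{J}_j\mathcal{L}_0^{-1}\sigma_i\}$, use the renewal condition to replace $f_i\sigma_i$ by $\mathcal{J}_i\rho_{\rm ss}$, and invoke $\sum_i\mathcal{J}_i=\mathcal{L}-\mathcal{L}_0$ together with $\mathcal{L}\rho_{\rm ss}=0$ to collapse the sum to $\tr\{\mathcal{J}_j\rho_{\rm ss}\}=f_j$. The only difference is cosmetic ordering: you solve the steady-state equation for $\rho$ first and then apply $\mathcal{J}_j$, whereas the paper evaluates $\sum_i p(j|i)f_i$ directly.
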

\begin{proof}
    It suffices to prove the result for $f_k$. 
    We assume that $\mathcal{L}_0$ is invertible; this condition is tantamount to the normalisability of the waiting time distribution, and means that a jump will always occur in the end with unit probability. Then, $p(j|i)$ can be rewritten as
    \begin{equation}
        p(j|i) = - \tr\left[\mathcal{J}_j \mathcal{L}_0^{-1} \sigma_i\right]\,.
    \end{equation}
    By Eq.~\eqref{eq:renewal_condition} one then has that 
    \begin{align*}
        \sum_{i\in \obsSet} p(j|i) f_i &= - \sum_{i \in \obsSet} \tr\big\{ \mathcal{J}_j \mathcal{L}_0^{-1} \sigma_i\big\} \tr\{ \mathcal{J}_i \rho_{\rm ss}\} 
        \\
        &= - \sum_{i \in \obsSet} \tr\big\{ \mathcal{J}_j \mathcal{L}_0^{-1} \mathcal{J}_i \rho_{\rm ss}\big\}\,.
    \end{align*}
    We now use the fact that $\sum_{i \in \obsSet} \mathcal{J}_i = \mathcal{L}-\mathcal{L}_0$, and $\mathcal{L}\rho_{\rm ss} = 0$. Hence, we arrive at 
    $\sum_{i\in \obsSet} p(j|i) f_i = \tr\{\mathcal{J}_j \rho_{\rm ss}\} = f_j$.
\end{proof}

As defined, the measurement record probability distribution $\prob(\omega_{1:N})$ in Eq.~\eqref{eq:probability_measurement_record_renewal} is not a stationary process. 
This is because it is conditioned on an arbitrary initial channel $k_0$. 
We can make it stationary if we modify it as 
\begin{equation}
    \prob(\omega_{1:N}) = \sum_{k_0} W(\tau_N, k_N | k_{n-1}) \cdots W(\tau_1, k_1 | k_0) p_{k_0}\,,
    \label{eq:renewal_P_new_dist_stationary}
\end{equation}
where $p_{k_0}$ is given in Eq.~\eqref{eq:p_k_appendix_markov}.
To illustrate the stationarity, consider $\omega_2 = (k_1,\tau_1,k_2,\tau_2)$. Then 
\begin{align*}
    \sum_{k_1} \int\limits_0^\infty d\tau_1 P(k_1,\tau_1,k_2,\tau_2) &= \sum_{k_1} \int\limits_0^\infty d\tau_1  \sum_{k_0} W(\tau_2,k_2|k_1) W(\tau_1,k_1|k_0) p_{k_0}
    \\
    &= \sum_{k_0,k_1} W(\tau_2,k_2|k_1) p(k_1|k_0) p_{k_0}
    \\
    &= \sum_{k_1} W(\tau_2,k_2|k_1) p_{k_1} = P(k_2,\tau_2)\,.
\end{align*}
An important consequence of stationarity is the following lemma.
\begin{lemma}
    Let $g(\tau_i, k_{i-1}, k_i)$ denote a function depending only on two consecutive jumps, and the time $\tau_i$ between them. Then, any average over a $N$-steps trajectory simplifies to   
    \begin{equation}
    \begin{split}
        \mathbb{E}\left[g(\tau_i, k_{i-1}, k_i)\right] & = \sum_{k_0\ldots k_N \in \obsSet} \int_0^\infty d\tau_1 \cdots \int_0^\infty d\tau_N \prob\left(k_0, (\tau_1, k_1),\ldots, (\tau_N, k_N) \right) g(\tau_i, k_{i-1}, k_i) \\
        & = \sum_{k_{i-1}, k_i \in \obsSet} \int_0^\infty d\tau_i W(\tau_i, k_i|k_{i-1}) p_{k_{i-1}} g(\tau_i, k_{i-1}, k_i)\,.
    \end{split}
    \label{eq:stationarity_property_g}
    \end{equation}
\end{lemma}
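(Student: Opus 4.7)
The approach is a direct marginalization of the stationary joint distribution in Eq.~\eqref{eq:renewal_P_new_dist_stationary}, treating the indices to the right of $i$ (i.e.\ jumps $j>i$) and to the left of $i-1$ (i.e.\ jumps $j<i-1$) separately. Since $g$ only depends on $\tau_i, k_{i-1}, k_i$, every other variable can be integrated out freely, and the only structural facts we need are (a) the single-step normalization $\sum_k \int_0^\infty W(\tau,k|q)\,d\tau=1$, (b) the identity $\int_0^\infty W(\tau,k|q)\,d\tau=p(k|q)$, and (c) the Markov-chain stationarity $\sum_q p(k|q)p_q=p_k$ established in the preceding lemma.

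The plan is as follows. First I would insert the factorized form $\prob(\omega_{1:N})=\prod_{j=1}^N W(\tau_j,k_j|k_{j-1})\,p_{k_0}$ into the multi-dimensional integral on the right-hand side. Then I would peel off the ``future'' indices $j=N,N-1,\dots,i+1$ one at a time: for each such $j$, the pair $(\tau_j,k_j)$ appears only inside the single factor $W(\tau_j,k_j|k_{j-1})$ so the partial sum/integral $\sum_{k_j}\int_0^\infty d\tau_j\,W(\tau_j,k_j|k_{j-1})=1$ collapses it cleanly, eliminating all the future variables down to $k_i$.

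Next I would peel off the ``past'' indices $j=1,2,\dots,i-1$ from the opposite end. The time integrations $\int_0^\infty d\tau_j\,W(\tau_j,k_j|k_{j-1})=p(k_j|k_{j-1})$ replace each of these WTD factors by a transition probability. Summing iteratively over $k_0,k_1,\dots,k_{i-2}$ and repeatedly applying $\sum_q p(k|q)p_q = p_k$ from the preceding lemma yields, at each step, $\sum_{k_{j-1}} p(k_j|k_{j-1})p_{k_{j-1}} = p_{k_j}$, so the prefactor $p_{k_0}$ gets propagated forward and ends up as $p_{k_{i-1}}$ once all past indices have been eliminated. What remains is exactly
\begin{equation*}
\sum_{k_{i-1},k_i\in\obsSet}\int_0^\infty d\tau_i\,W(\tau_i,k_i|k_{i-1})\,p_{k_{i-1}}\,g(\tau_i,k_{i-1},k_i),
\end{equation*}
which is the claimed identity.

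There is no genuine analytical obstacle here; the proof is essentially bookkeeping, and the only thing that has to be handled with care is the direction of the iteration on the past side, where stationarity of the Markov chain on channels must be invoked $(i-1)$ times to absorb the leftmost factor $p_{k_0}$ into the correct position. A minor subtlety is that the statement implicitly requires $1\le i\le N$, with appropriate trivial adjustments at the boundaries $i=1$ (no past indices to collapse) and $i=N$ (no future indices to collapse), both of which follow from the same argument with one of the two steps being empty.
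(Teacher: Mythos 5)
Your proof is correct and follows essentially the same route the paper intends: the paper does not spell out a general proof of this lemma, but derives it as a consequence of the stationarity computation shown just above it (the $\omega_2$ example), which is exactly your marginalization argument --- collapsing future indices via $\sum_k\int_0^\infty W(\tau,k|q)\,d\tau=1$ and past indices via $\int_0^\infty W(\tau,k|q)\,d\tau=p(k|q)$ together with the repeated application of $\sum_q p(k|q)p_q=p_k$ from the preceding lemma. Your write-up simply makes explicit the general-$i$ bookkeeping that the paper leaves implicit.
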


We can now turn to the calculation of the Fisher information in a measurement record $\omega_{1:N}$ for a renewal process. Combining Eq.~\eqref{eq:Fisher_information} and Eq.~\eqref{eq:renewal_P_new_dist_stationary} we get
\begin{equation}
    F_{1:N}(\theta) = - \mathbb{E}\left[\sum_{j=2}^N \frac{\partial^2}{\partial\theta^2} \log W(\tau_j,k_j|k_{j-1}) + \frac{\partial^2}{\partial\theta^2}\log \sum_{k_0} W(\tau_1,k_1|k_{0}) p_{k_0}\right]\,.
\end{equation}
The last term stems from our modification of $\prob(\omega_{1:N})$ in Eq.~\eqref{eq:renewal_P_new_dist_stationary}. Notice that this is independent of $N$, while the first term will be extensive in $N$. 
The second term, therefore, leads to a negligible contribution. 
As for the first term, we can  use Eq.~\eqref{eq:stationarity_property_g}. 
This will give rise to $N$ identical terms, leading us exactly to  the expression in Eq.~\eqref{eq:Fisher_renewal}.

\section{MLE estimation for the qubit thermometry process}
\label{app:MLE_qubit_thermo}
In this Appendix, we propose an estimation task for a renewal system, to illustrate the results of Sec.~\ref{sect:Multi_channel_renewal_processes}. We consider the qubit thermometer, whose Hamiltonian and jump operators are given in Sec.~\ref{sec:qubit_thermometry}, and we aim to estimating the average occupation number $\bar{n}$. 

\begin{figure}[htbp]
    \centering
    \includegraphics[width=\textwidth]{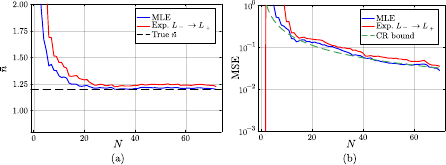}
    \caption{An estimation task performed on the qubit thermometer. On the left, the expectation value of the estimated $\bar{n}$ as a function of the number of considered jumps; on the right, the mean squared error for the two estimation strategies. Parameters: $\omega=1$, $\Omega=0.1$, $\gamma=0.02$, $\bar{n}=1.2$; 100 trajectories.}
    \label{fig:MLE_renewal}
\end{figure}

For a renewal system, the log-likelihood function is easily  expressed in terms of the WTD, as in Eq.~\eqref{eq:loglikelihood_renewal}; as expected, MLE estimation asymptotically saturates the Cram\'er-Rao bound (see Fig.~\ref{fig:MLE_renewal}). 

We also consider a different estimator, given by the average waiting time between a $L_-$ and a $L_+$ jumps. The expectation value of the waiting time between these two jumps is given by
\begin{equation}
    \langle \tau \rangle_{-+} = \int_{0}^\infty d\tau \tau W(\tau|-,+) = - \frac{\tr\left[\jump{+}\left(\mathcal{L}_0\right)^{-2}\jump{-}\rho_0\right]}{\tr\left[\jump{+}\left(\mathcal{L}_0\right)^{-1}\mathcal{J}_-\rho_0\right]},
\end{equation}
and is, of course, a function of $\bar{n}$. Given a measurement record, one can compute the average waiting time between two jumps $L_-$, $L_+$, then numerically invert $\langle \tau\rangle_{-+}$ as a function of $\bar{n}$ to find the estimate. Clearly, the performances of the estimator based on the average waiting time are worse than the ones of the MLE, because the former does not have access to all the information contained in the shape of the waiting time distributions.

\section{Single exponential WTD and classical master equations}
\label{sect:particular_cases}
In general, carrying out the integral in Eq.~\eqref{eq:Fisher_renewal} is difficult because the WTD is usually a sum of (possibly complex) exponentials in $\tau$. 
A case that offers a significant simplification is when it is a single decaying exponential:
\begin{equation}\label{eq:renewal_single_exponential}
    W(\tau,k|q) = p(k|q) b_{kq} e^{-b_{kq} \tau}\,,
\end{equation}
for coefficients $p(k|q)>0$ and $b_{kq}>0$ that may both depend on $\theta$. 
This, as we will discuss below, is the case of classical (Pauli) master equations describing incoherent dynamics. 
The coefficient $p(k|q)$ is defined to match 
Eq.~\eqref{eq:renewal_markov_transition}, so that normalization implies $\sum_k p(k|q) = 1$ for all $q$. 
Carrying out the integral in Eq.~\eqref{eq:Fisher_renewal} allows us to write 
it in the form of Eq.~\eqref{eq:Fisher_renewal_splitting}, with 
\begin{subequations}\label{eq:renewal_two_fishers_single_expo}
\begin{align}
    F_{1:N}^{\rm ch}(\theta) &= N \sum_{k,q\in\obsSet}p(k|q) p_q \big(\partial_\theta \ln p(k|q)\big)^2\,,
    \\
    F_{1:N}^{\rm times|ch}(\theta) &= N \sum_{k,q\in\obsSet}p(k|q) p_q \big(\partial_\theta \ln b_{kq}\big)^2\,.
\end{align}
\end{subequations}
The Fisher information contained in the channels is therefore the same as in the general renewal case of Eq.~\eqref{eq:Fisher_renewal_channels}, but the second contribution now simplifies considerably. 

We now discuss physical scenarios in which the single-exponential assumption of Eq.~\eqref{eq:renewal_single_exponential} will appear. 
Recall that for renewal processes $L_k = \sqrt{\gamma_k} \ket{\mu_k}\bra{\nu_k}$, where  $\ket{\nu_k}$ and $\ket{\mu_k}$ are generic (normalized) kets; i.e., they in principle do not need to be elements of the same basis.
A WTD with a single exponential will occur  (i)~$\ket{\nu_k}$ are orthogonal among each other and (ii)~satisfy 
$[H,\ket{\nu_k}\bra{\nu_k}]=0$ for all $k$.
The $\ket{\mu_k}$ can still be arbitrary.
This therefore means that the post-jump states $\sigma_k= \ket{\mu_k}\bra{\mu_k}$ can have arbitrary coherences, but the pre-jump states must be diagonal in the eigenbasis of $H$. 
Referring to Eq.~\eqref{eq:renewal_prob_amplitude}, this  implies that 
\begin{equation}
    \Psi(\tau,k|q) = \sqrt{\gamma_k} e^{-i H_{kk} t} e^{-\gamma_k t/2} \braket{\nu_k | \mu_k}\,,
\end{equation}
where $H_{kk} = \braket{ \nu_k | H |\nu_k }$. It then follows that 
\begin{equation}\label{eq:renewal_WTD_single_exponential}
    W(\tau,k|q) = \gamma_k e^{-\gamma_k t}| \braket{ \nu_k | \mu_q } |^2\,,
\end{equation}
which has exactly the form in Eq.~\eqref{eq:renewal_single_exponential}, with transition probability between jumps $p(k|q) = |\braket{ \nu_k | \mu_q }|^2$ and decay rate $b_{kq} = \gamma_k$.
Interestingly, in this case $b_{kq}$ only depends on the post-jump state, so that Eq.~\eqref{eq:renewal_splitting_waiting_times_conditioning} simplifies to
\begin{equation}
    W(\tau|k,q) \equiv W(\tau|k) = \gamma_k e^{-\gamma_k t}\,.
\end{equation}
That is, the WTD between jumps in channels $q\to k$ depends only on the final jump state $k$.

In practice, the above scenario appears most often when dealing with incoherent systems described by classical (Pauli) master equations. 
Let $H\ket{i} = E_i \ket{i}$. Pauli equations occur when the jump operators have the form 
$L_{ji} = \sqrt{R_{ji}} \ket{j}\bra{i}$, where $R_{ji}$ is the transition rate to go from $\ket{i} \to \ket{j}$ (with $i\neq j$). 
Because the transitions occur between energy eigenstates, the long-time dynamics will be incoherent in the basis $\ket{i}$. 
In fact, starting from the original master Eq.~\eqref{eq:GKSL}, one may show that the probabilities $p_i = \braket{ i |\rho |i }$ will satisfy the Pauli rate equation 
\begin{equation}\label{eq:renewal_classical_ME}
    \frac{dp_i}{dt} = \sum_j R_{ij} p_j - \Gamma_i p_i,\qquad \Gamma_i = \sum_j R_{ji}\,.
\end{equation}
In quantum master equations we label the jumps by the channels $k$ appearing in the jump operators $L_k$. 
For classical master equations the index $k$ is a composite index $k \equiv i\to j$, representing the initial and final state of that jump operator, $L_{ji} = \sqrt{R_{ji}} \ket{j}\bra{i}$. 
If all channels are monitored, a jump $q = n\to m$ can  be followed by a jump $k = i\to j$ if and only if $m \equiv i$. 
The corresponding WTD is readily found from Eq.~\eqref{eq:renewal_W_as_probability_amplitude} or [\eqref{eq:renewal_WTD_single_exponential}], and reads 
\begin{equation}
    W(\tau, i\to j| n\to m) = \delta_{m,i} R_{ji} e^{-\Gamma_i \tau}\,.
\end{equation}
This is clearly in the form of Eq.~\eqref{eq:renewal_single_exponential}, so the results in Eq.~\eqref{eq:renewal_two_fishers_single_expo} apply.
However, some bookkeeping is required to translate the ``$q\to k$'' notation to ``$n\to m = i \to j$.''
The transition probability $p(k|q)$ in Eq.~\eqref{eq:renewal_markov_transition} becomes 
$p(i\to j| n\to m) = \delta_{m,i} \frac{R_{ji}}{\Gamma_i}$, while $b_{kq}$ is replaced with 
$b_{i\to j, n\to m} = \Gamma_i$.

We also need $p_q \equiv p_{n\to m}$, which is the solution of Eq.~\eqref{eq:renewal_markov_chain_jump_channels}:
\begin{equation}
    \sum_{n} \frac{R_{ji}}{\Gamma_i} p_{n\to i} = p_{i\to j}\,.
\end{equation}
As one may verify, the solution reads 
\begin{equation}
    p_{n\to m} = \frac{R_{mn} p_n^{\rm ss}}{A}\,,
\end{equation}
where $A = \sum_i \Gamma_i p_i^{\rm ss}$ is the steady-state dynamical activity (number of jumps per unit time) and $p_i^{\rm ss}$ is the steady-state of the master equation~\eqref{eq:renewal_classical_ME}; i.e., the solution of $\sum_j R_{ij} p_j^{\rm ss} = \Gamma_i p_i^{\rm ss}$.
Physically, $p_{n\to m}$ is the probability of observing pairs of states $(n,m)$ in the jump sequence. 
Plugging this in Eq.~\eqref{eq:renewal_two_fishers_single_expo} then finally yields, after some simplifications (since $\Gamma_i = \sum_j R_{ji}$):
\begin{subequations}
\begin{align}
    F_{1:N}(\theta) &= \frac{N}{A} \sum_{i,j} p_i^{\rm ss} 
    \frac{(\partial_\theta R_{ji})^2}{R_{ji}}\,,
    \\
    F_{1:N}^{\rm ch}(\theta) &= \frac{N}{A} \sum_{i,j} R_{ji}p_i^{\rm ss} \big(\partial_\theta \ln R_{ji}/\Gamma_i\big)^2\,, \\
    F_{1:N}^{\rm times|ch}(\theta) &=  \frac{N}{A} \sum_{i}p_i^{\rm ss}  \frac{(\partial_\theta \Gamma_i)^2}{\Gamma_i}\,,
\end{align}
\end{subequations}
and $F_{1:N}^{\rm ch}(\theta) = F_{1:N}(\theta) - F_{1:N}^{\rm times|ch}(\theta)$.

\section{Bound on Fisher information}\label{app:FI_bound}
Even when Fisher information is difficult to calculate, it can be easier to calculate some bounds on the Fisher information. For example, if the theory is described by a probability distribution of a real variable $\textrm{P}(x)dx$ with mean $\mu$ and variance $\sigma^2$, then $F\geq \frac{(\partial_\theta \mu)^2}{\sigma^2}$. It is often the case that the mean and variance are relatively easy to calculate. In this appendix, a similar bound is derived in the case of WTDs of renewal processes.

The bound arises from the Cauchy-Schwarz inequality,
\begin{equation}
    \int d\tau\, f^2(\tau) p(\tau) \times \int d\tau\, g^2(\tau) p(\tau) \leq \left[ \int d\tau\, f(\tau)g(\tau) p(\tau) \right]^2\,,
\end{equation}
where $f$ and $g$ are any real functions, while $p$ is a non-negative function. For our purposes, the integral is evaluated over $[0,\infty)$, representing the time intervals. Observe that this inequality follows from the functional inner product, $\left\langle f,g \right\rangle = \int d\tau\, f(\tau)g(\tau) p(\tau)$. Here, 
\begin{subequations}
    \begin{align}
        f(\tau) &= \partial_\theta \log p(x)\,, \\
        p(\tau) &= W(\tau,k\vert q)\,.
    \end{align}
\end{subequations}
Evaluating the terms in the Cauchy-Schwarz inequality, first,
\begin{equation}\label{eq:cauchy_schwarz}
    \int d\tau\, f^2(\tau) p(\tau) = \int d\tau\, \frac{\left[\partial_\theta W(\tau,k\vert q)\right]^2}{W(\tau,k\vert q)}\,.
\end{equation}
Then
\begin{align*}
    \int d\tau\, f(\tau)g(\tau) p(\tau) &= \int d\tau\, g(\tau) \partial_\theta W(\tau,k\vert q) \\ 
    &= \int d\tau \partial_\theta\left\{ g(\tau) W(\tau,k\vert q) \right\} - d\tau\, [\partial_\theta g(\tau)] W(\tau,k\vert q) \\
    &= \partial_\theta\left\{ \mathbb{E}\left[ g(\tau) \middle\vert k,q \right] p\left( k\middle\vert q \right) \right\} - \mathbb{E}\left[ \partial_\theta g(\tau) \middle\vert k,q \right] p\left( k\middle\vert q \right)\,.
\end{align*}
The above expression can be simplified by demanding that the (conditional) expectation of $g(\tau)$ vanish. For a function $\tilde{g}(\tau)$ that does not satisfy this condition, one can shift the function:
\begin{equation}
    \tilde{g}(\tau) \to g(\tau) = \tilde{g}(\tau) - \mathbb{E}\left[ \tilde{g}(\tau) \middle\vert k,q \right]\,.
\end{equation}
So
\begin{equation}
    \int d\tau\, f(\tau)g(\tau) p(\tau) = - \mathbb{E}\left[ \partial_\theta g(\tau) \middle\vert k,q \right] p\left( k\middle\vert q \right)\,.
\end{equation}
Finally,
\begin{equation}
    \int d\tau\, g^2(\tau) p(\tau) = \var\left[ g(\tau) \middle\vert k,q \right] p\left( k\middle\vert q \right)\,,
\end{equation}
where $\mathbb{E}\left[ g^2(\tau) \middle\vert k,q \right]=\var\left[ g(\tau) \middle\vert k,q \right]$ because $\mathbb{E}\left[ g(\tau) \middle\vert k,q \right]=0$.

The resulting inequality Eq.~\eqref{eq:cauchy_schwarz} can be written as
\begin{equation}\label{eq:FI_bound_incomplete}
\int d\tau\, \frac{\left[\partial_\theta W(\tau,k\vert q)\right]^2}{W(\tau,k\vert q)} \geq \frac{\mathbb{E}\left[ \partial_\theta g(\tau) \middle\vert k,q \right]^2}{\var\left[ g(\tau) \middle\vert k,q \right]}p\left( k\middle\vert q \right) \,.
\end{equation}
Performing a weighted sum of the above inequality over $k$ and $q$ with weights $p_q$, yields the lower bound on Fisher information Eq.~\eqref{eq:Fisher_renewal},
\begin{equation}\label{eq:FI_bound_general0}
    F_{1:N}(\theta)/N \geq \sum_{k,q} \frac{\mathbb{E}\left[ \partial_\theta g(\tau) \middle\vert k,q \right]^2}{\var\left[ g(\tau) \middle\vert k,q \right]} p\left( k\middle\vert q \right) p_q\,,
\end{equation}
where $g(\tau)$ is any function of $\tau$ with $\mathbb{E}\left[ g(\tau) \middle\vert k,q \right]=0$.

One may also consider the Fisher information conditioned on the knowledge of the jumps. The derivation of a lower bound follows as above in Eq.~\eqref{eq:FI_bound_incomplete} by considering $k$ to be trivial (so, e.g., $p\left( k\middle\vert q \right)\to 1$) and $q\to(k,q)$,
\[
\int d\tau\, \frac{\left[\partial_\theta W(\tau\vert k, q)\right]^2}{W(\tau\vert k, q)} \geq \frac{\mathbb{E}\left[ \partial_\theta g(\tau) \middle\vert k,q \right]^2}{\var\left[ g(\tau) \middle\vert k,q \right]} \,.
\]
Comparing to Eq.~\eqref{eq:Fisher_renewal_times_given_channels} and taking the weighted sum over $k$ and $q$ with weights $p\left( k\middle\vert q \right) p_q$ yields a bound,
\begin{equation}\label{eq:FI_bound_general_cond}
    F_{1:N}^{\rm times|ch}(\theta)/N \geq \sum_{k,q} \frac{\mathbb{E}\left[ \partial_\theta g(\tau) \middle\vert k,q \right]^2}{\var\left[ g(\tau) \middle\vert k,q \right]} p\left( k\middle\vert q \right) p_q\,.
\end{equation}
This is precisely the same as Eq.~\eqref{eq:FI_bound_general0}, except that the bound is on the conditional part of the Fisher information, alone. Because the Fisher information can be decomposed as $F=F^{\rm ch} + F^{\rm times|ch}$, we can thus improve the bound in Eq.~\eqref{eq:FI_bound_general0} by adding the information contained in the observed channels,
\begin{equation}\label{eq:FI_bound_general}
    F_{1:N}(\theta) \geq F_{1:N}^{\rm ch}(\theta) + N \sum_{k,q} \frac{\mathbb{E}\left[ \partial_\theta g(\tau) \middle\vert k,q \right]^2}{\var\left[ g(\tau) \middle\vert k,q \right]} p\left( k\middle\vert q \right) p_q\,,
\end{equation}

The common form of Eq.~\eqref{eq:FI_bound_general_cond} uses $g(\tau)=\tau-\mu_{kq}$ with $\mu_{kq}=\mathbb{E}\left[ \tau \middle\vert k,q \right]$. Similarly denoting the variance $\sigma_{kq}^2=\var\left[ \tau \middle\vert k,q \right]$, a bound on Fisher information is obtained,
\begin{equation}\label{eq:FI_bound_avg}
     F_{1:N}^{\rm times|ch}(\theta) \geq N \sum_{k,q} \frac{\left[\partial_\theta \mu_{kq}\right]^2}{\sigma_{kq}^2} p\left( k\middle\vert q \right) p_q\,.
\end{equation}
This bound roughly arises from considering the average delay between observations.

One may consider how the calculations change when rescaling $g(\tau)$:
\begin{equation}
    \bar{g}(\tau) = h(\theta,k,q) g(\tau)\,.
\end{equation}
It remains the case that $\mathbb{E}\left[ \bar{g}(\tau) \middle\vert k,q \right] = h(\theta,k,q) \mathbb{E}\left[ g(\tau) \middle\vert k,q \right]=0$.
The variance is 
\[
\var\left[ \bar{g}(\tau) \middle\vert k,q \right] = h^2(\theta,k,q) \var\left[ g(\tau) \middle\vert k,q \right]\,.
\]
Then 
\begin{align*}
    \mathbb{E}\left[ \partial_\theta \bar{g}(\tau) \middle\vert k,q \right] &= h(\theta,k,q) \mathbb{E}\left[ \partial_\theta g(\tau) \middle\vert k,q \right] + \left( \partial_\theta h(\theta,k,q) \right) \mathbb{E}\left[ g(\tau) \middle\vert k,q \right] \\
    &= h(\theta,k,q) \mathbb{E}\left[ \partial_\theta g(\tau) \middle\vert k,q \right]\,.
\end{align*}
The factor of $h^2(\theta,k,q)$ from this expectation value (after squaring) cancels the same factor from the variance. That is, rescaling $g(\tau)$ does not affect the bound on Fisher information. Considering a bound based on a weighted average instead of the average, as in Eq.~\eqref{eq:FI_bound_avg}, will then result in the same lower bound. 

Though rescaling has no effect, the bound can be improved by considering the function $g(\tau)$ with higher order terms in $\tau$. For example, if one considers quadratic functions in $\tau$, a stronger bound is obtained in terms of the first four statistical moments~\cite{bhattacharyya_analogues_1946, jarrett_bounds_1984, stein_pessimistic_2017}.

\section{Monitoring operator for multi-parameter estimation}
\label{sect:Monitoring_operator_for_multi_parameter_estimation}
In this appendix, we present a generalization of the monitoring operator formalism to multi-parameter estimation. The parameters to be estimated are represented by the vector $\vec{\theta}$. The Fisher information matrix~\cite{Zegers_2015, Ly_2017} is defined as 
\begin{equation}
    \left[F(\vec{\theta})\right]_{ij} = \mathbb{E}\left[\frac{1}{\prob(X)^2}\partial_{\theta_i} \prob(X)\partial_{\theta_j} \prob(X)\right]\,.
\end{equation}
This expression can be rewritten in terms of the monitoring operator, introduced in Eq.~\eqref{eq:monitoring_operator}, by defining parameter-specific operators $\xi^{(i)}_t$, $\xi^{(j)}_j$, with the important property
\begin{equation}
    \tr\left[\xi_t^{(i)}\right] = \frac{1}{\prob(X)}\partial_{\theta_i} \prob(X)\,.
\end{equation}
The operators $\xi^{(i)}_t$ evolve in the same way as the operator $\xi_t$, but all the derivatives have to be taken with respect to $\theta_i$, while the other parameters are assumed to be fixed. In this way, we finally get the expression for the Fisher information matrix
\begin{equation}
    \left[F(\vec{\theta})\right]_{ij} = \mathbb{E}\left[\tr\left[\xi_t^{(i)}\right] \tr\left[\xi_t^{(j)}\right]\right]\,. 
    \label{eq:monitoring_multi_parameter}
\end{equation}
In order to obtain the Fisher information matrix, one should then separately evolve monitoring operators $\xi_t^{(i)}$ for each component of the parameters vector, and combine them according to Eq.~\eqref{eq:monitoring_multi_parameter}.

\section{Fisher information rate: average}
\label{sect:FI_rate_average}
In this appendix, we show the derivation of Eq.~\eqref{eq:Fisher_information_rate_average}, yielding the average Fisher information rate for a generic measurement record. The Fisher information variation is given by
\begin{equation}
    dF = F(X_{T+1}|X_{1:T}) = \mathbb{E}_{1:T}\left[\frac{1}{\prob(x_{T+1}|x_{1:T})}\left(\partial_\theta \prob(x_{T+1}|x_{1:T})\right)^2\right]\,,
\end{equation}
where the expectation value is taken with respect to all the possible trajectories of $T$ time steps, not including timestep $T+1$. We now introduce the stochastic currents $I_T^k$, such that $\prob(k|x_{1:t}) = I_t^k dt$, for $k$ labeling one of the $|\obsSet|$ monitored jump channels:
\begin{equation}
    dF = \mathbb{E}\left[\sum_{k=1}^{|\obsSet|} \frac{1}{I_T^k dt}\left(\partial_\theta I^k_T dt\right)^2 + \frac{1}{1-\sum_{k=1}^{|\obsSet|} I_T^k dt}\left(\sum_{k=1}^{|\obsSet|} \partial_\theta I_T^k dt\right)^2\right]\,.
\end{equation}
We note that the first term, which only contains single-current contributions, is of order $dt$ in the $dt\to 0$ limit. On the contrary, the second term, which contains also some non-diagonal contributions mixing currents together, is of order $dt^2$. To compute the rate, only terms of order $dt$ should be taken into account; we obtain therefore
\begin{equation}
    dF = \mathbb{E} \left[ \sum_{k=1}^{|\obsSet|} \frac{1}{I_T^k dt}\left(\partial_\theta I_T^k dt\right)^2 \right]\,,
\end{equation}
from which the Fisher information rate follows.

\end{widetext}

\begin{algorithm*}[htbp]
\caption{Gillespie evolution for the stochastic Fisher information}\label{alg:Gillespie}
\KwIn{Hamiltonian $H_\theta$; displaced Hamiltonians $H_{\theta+d\theta}$ and $H_{\theta-d\theta}$;}
list of monitored jumps Kraus operators [$M_\theta$]\;
displaced lists of monitored jumps Kraus operators [$M_{\theta+d\theta}$] and [$M_{\theta-d\theta}$]\;
derivation increment $d\theta$; initial state $\ket{\psi_0}$; final time $t_f$; timestep $dt$; number of trajectories\;

\Precompute\
$J=\sum_k M_k^\dagger M_k$\;
Effective Hamiltonian $H_e = H - \frac{i}{2}J$; Displaced effective Hamiltonian $H_{e\pm} = H_{\theta\pm d\theta} - \frac{i}{2} J_{\pm}$\;
Displaced jump operators $J_\pm = \sum_k M_{k,\theta\pm d\theta}^\dagger M_{k,\theta\pm d\theta}$\;
Let $V[t]$ be the list of the no-jump evolution operators\;
Let $Q_s[t]$ be the list of the non-state-dependent parts of the WTD\;
Let $\dot{V}[t]$ be the list of the derivatives of the no-jump evolution operators with respect to $\theta$\;
\For{all times $t < t_f$}{
    $V[t] = e^{-i H_e t}$\;
    $Q_s[t] = V[t]^\dagger J V[t]$\;
    $\dot{V}[t] = \left[\exp(-i H_{e+} t) - \exp(-i H_{e-} t)\right] / 2d\theta$\;
}
Let [$\dot{M}$] be the list of the derivatives of the jump operators with respect to the parameter $\theta$\;
Let $\Delta[t]$ be the list of all the derivatives of $M_kV[t]$ for all values of $k$\;
\For{all the jump operators in $[M]$}{
    $\dot{M}_k = \left(M_{k+} - M_{k-}\right) / 2d\theta$\;
    Let $\Delta_k$ be the list of the derivatives of $M_kV[t]$\;
    \For{$t < t_f$}{
        $\Delta_k[t] = \dot{M}_k V[t] + M_k \dot{V}[t]$\;
    }
    $\Delta[k] = \Delta_k$\;
}

\For{all trajectories}{
    Fix initial state $\ket{\psi}=\ket{\psi_0}$, initial monitoring operator $\xi = 0$ and initial time $\tau = 0$\;
    \While{$t < t_f$}{
        $\rho = \ket{\psi}\bra{\psi}$\;
        Let $W[t]$ be the list to contain the WTD\;
        \For{every value $Q$ in $Q_s$}{
            $W[t] = \braket{\psi|Q[t]|\psi}$\;
        }
        Sample a time $T$ from the distribution $W[t]$\;
        $\tau += T$\;
        Update the state accordingly: $\ket{\psi} = V[T]\ket{\psi}$\;
        Let $[p]$ be the list of the jump probabilities\;
        \For{all the jump operators in $[M]$}{
            $p = \braket{\psi | M_k^\dagger M_k | \psi}$\;
        }
        Sample a jump channel $\bar{k}$ from the weights distribution $[p]$\;
        Update and normalize the state $\ket{\psi} = M_{\bar{k}} \ket{\psi} / ||\psi||$\;
        Compute the evolution of the $\xi$ operator.
        $\xi = \frac{1}{||\psi||^2} \left(M_{\bar{k}} V[T] \xi V[T]^\dagger M_{\bar{k}}^\dagger + \Delta[\bar{k}][T] \rho V[T]^\dagger M_{\bar{k}}^\dagger + M_{\bar{k}} V[T] \rho \Delta[\bar{k}][T]^\dagger\right)$\;
        Compute the stochastic Fisher information $f = \tr[\xi]^2$\;
    }
}
\KwOut{[$(t_j, k_j, \rho_j, f_j)$] vectors of measurement records, states and stochastic Fisher information values for all the trajectories, $V[t]$, $\dot{V}[t]$ often needed for further calculations, and the list of times at which $V$ is known.}
\end{algorithm*}

\end{document}